%2019-08-12 begin write%2020-05-08 改正文
%主要2020-8月4日修改舒展老师给的意见，%包括notation，对比实验，突出点，优点之类的描述%2020-8-9,10
%2020-8-12,13小修改
%2020-8-15修改
\documentclass[journal]{IEEEtran}
\ifCLASSINFOpdf
\else
\fi
\hyphenation{op-tical net-works semi-conduc-tor}
\usepackage{verbatim}
\usepackage{cite}
\usepackage{amsmath, amsthm}
\usepackage{amssymb}%可以打出小黑框
\usepackage{mathrsfs}%花体
\usepackage{graphicx}
\usepackage{subfigure}
\usepackage{epstopdf}
\usepackage{eucal}
\usepackage{caption}
\usepackage[noend]{algpseudocode}%算法框
\usepackage{algorithmicx,algorithm}%算法框
\usepackage{float}%让图片可以放在指定位置
%用于对部分内容高亮
\usepackage{color, soul}

\theoremstyle{plain}% default
\newtheorem{thm}{Theorem}%[section]
\newtheorem{lem}{Lemma}%[section]
\newtheorem{ass}{Assumption}%[section]

\theoremstyle{definition}
\newtheorem{defn}{Definition}%[section]

\theoremstyle{remark}
\newtheorem{rem}{Remark}%[section]

% based on the example from doc of "amsthm" package, sec. 4.3.2
\newtheoremstyle{noparens}%
  {}{}%
  {\itshape}{}%
  {\bfseries}{.}%
  { }%
  {\thmname{#1}\thmnumber{ #2}\mdseries\thmnote{ #3}}
\theoremstyle{noparens}
%zhi hu
\newcommand{\norm}[1]{\left\lVert#1\right\rVert}

\allowdisplaybreaks[4]
\begin{document}
\title{
%Model-free Stochastic
%Linear Quadratic Optimal Control for Linear Discrete-time Systems with Multiple Noises
Model-free optimal control of discrete-time systems with additive and multiplicative noises}
\author{Jing Lai, ~\IEEEmembership{student,}
        Junlin Xiong,~\IEEEmembership{Member~IEEE,}
        Zhan Shu,~\IEEEmembership{Senior Member~IEEE,}
% <-this % stops a space
\thanks{This work was financially supported by the National Natural Science Foundation of China under Grant 61374026 and Grant 61773357.}
\thanks{J. Lai and J. Xiong are with the Department of Automation, University of Science and
Technology of China, Hefei 230026, China (e-mail:lj120@mail.ustc.edu.cn; junlin.xiong@gmail.com).}% <-this % stops a space
\thanks{Z. Shu is with the Department of Electrical and Computer Engineering, University of Alberta, Edmonton, AB T6G 2R3, Canada (e-mail:
hustd8@gmail.com).}% <-this % stops a space
}
\markboth{---}%
{Shell \MakeLowercase{\textit{et al.}}: Bare Demo of IEEEtran.cls for IEEE Journals}

\maketitle
\begin{abstract}
This paper investigates the optimal control problem for a class of discrete-time
 stochastic systems subject to additive and multiplicative noises.
A stochastic Lyapunov equation and a stochastic algebra Riccati
equation are established for the existence of the optimal admissible control policy.
%The solution to the SARE are proven to be the optimal control gain.
%However, solving the SARE requires the complement knowledge of system matrices and it's difficult to solve the SARE analytically.
%To overcome the above problems,
A model-free reinforcement learning algorithm is proposed to learn the optimal admissible control policy using the data of the system states and inputs without requiring any knowledge of the system matrices.
It is proven that the learning algorithm converges to the optimal admissible control policy.
The implementation
of the model-free algorithm is based on batch least squares and numerical average.
The proposed algorithm is illustrated through a numerical example, which shows our algorithm outperforms other policy iteration algorithms.
%but performs slightly worse than a model-based algorithm.
\end{abstract}
% Note that keywords are not normally used for peerreview papers.
\begin{IEEEkeywords}
stochastic linear quadratic regulator, additive and multiplicative noises, model-free, reinforcement learning
\end{IEEEkeywords}
\IEEEpeerreviewmaketitle

\section{Introduction}
\IEEEPARstart{R}{einforcement} learning (RL) \cite{bertsekas2019reinforcement}
has been widely used for solving optimization problems in poorly structured
or initially unknown environments.
In the control system society, RL has been extensively studied to solve
the optimal control problem without
 requiring any knowledge of the system matrices \cite{bradtke1994adaptive,DV2009,kiumarsi2017optimal}.
In particular, %for deterministic systems,
policy iteration (PI) algorithms \cite{bertsekas1996neuro} were developed to solve the problem of optimal control for deterministic systems in \cite{KIUMARSI20141167,rizvi2018output,LEE20122850,modares2014linear}.
 Recently, extensions to RL-based control design for stochastic systems \cite{XU20121017,JZP2016noise,LEONG2020108759} have emerged as well.
 %al2008discrete%vrabie2009adaptive.
 %Nevertheless, the RL-based control design mentioned above focus on deterministic %systems while noises exists in many actual systems, such as communication system %\cite{7879704}, audio system \cite{sano2001active} and so on.

%LEONG2020108759
%The SLQ optimal control problem is pioneered by \cite{wonham1968matrix}.
%SLQ has been extensively studied
%\cite{kubrusly1985mean,ZHANG20082306,LIU2014264}
%which are based on the
% prerequisite that the system matrices must be known in advance.
% When the system model can't be known, these methods are invalid.
% Therefore, it makes sense to solve the SLQ problem in a model-free way.

%The optimal control of stochastic systems has attracted considerable attention \cite{bensoussan1985optimal,el1999h,wendell2012deter}.
%Recently,
Stochastic LQR (linear quadratic regulator) has been widely studied based on the RL method.
 For stochastic systems with additive noises, the authors of \cite{NIPS2019_9058} developed an approximate PI algorithm to solve the stochastic LQR problem in an model-free manner.
 In \cite{pmlr-v89-abbasi-yadkori19a}, a model-free learning algorithm was presented for solving the optimal control problem where policies were updated with respect to the average of all previous Q-function estimates.
 For stochastic systems with multiplicative noises, the authors of \cite{WANG2016379} presented an value iteration learning algorithm to
find the optimal control gain, where a
model neural network was used to assist the algorithm implementation.
In \cite{WANG20181}, the stochastic optimal control problem was converted into a deterministic one
and Q-learning algorithm was adopted to solve the problem where the system matrices were required partially.
In practice, many systems suffer from both multiplicative and additive noises, see
\cite{el1999h,7820092,1017567}.
For such stochastic systems,
a model-free learning algorithm based on Ito's lemma was developed for continuous-time systems in \cite{7260102}.
However, the stochastic LQR problems under additive and multiplicative noises are far from solved.

%{JZP2016noise}.
%However, to the best of our knowledge, there is no paper to consider the optimal control problem for discrete-time systems suffering  multiple noises with the RL method.
This paper aims to design an optimal
control policy for discrete-time stochastic systems using the RL method.
The systems under consideration are subject to both multiplicative and additive noises.
All of the system matrices are completely unknown.
The infinite horizon cost function may become
infinity due to the presence of the additive noise \cite{7260102}.
Hence, an infinite horizon cost function is to be minimized
where a discount factor is employed to guarantee
the cost boundedness.
In order to develop the learning algorithm, a stochastic Lyapunov equation (SLE)
and a stochastic algebra Riccati equation (SARE) are established.
Firstly, under the assumption that the system matrices are known,
an offline PI algorithm is proposed to solve the SARE iteratively.
%The policy evaluation in the on-line PI algorithm doesn't require any knowledge of the system matrices
We prove the convergence of the offline PI algorithm.
Secondly, after introducing Q-function, an online model-free RL algorithm is proposed without knowledge of the system matrices.
By showing the above algorithms are equivalent to each other, we conclude that the online model-free RL algorithm is also convergent to the optimal control policy.
Thirdly, to implement the online model-free RL algorithm,
a numerical averages is employed to approximate the expectation and
 batch least squares (BLS) is used to obtain the iterative kernel matrix of Q-function.
Our algorithm is implemented without the assumption that the noises are measurable, which is required in the cases of  continuous-time systems with additive and multiplicative noises  \cite{JZP2016noise},\cite{ZHANG20201}.
Finally, a numerical example is presented to illustrate the obtained results.
Empirically, our algorithm outperforms other PI algorithms, but
performs slightly worse than a model-based algorithm.
%The example shows that our RL-based learning algorithm can serve as a fundamental tool to
% develop model-free optimal control policy for discrete-time stochastic linear systems.
\begin{comment}
The organisation of this paper is as follows. In section \uppercase\expandafter{\romannumeral2}, the problem formulation and some basic results are presented. In section \uppercase\expandafter{\romannumeral3}, Q-function is introduced and the algorithm is presented to successive approximate the Q-function.
The properties and convergence analysis of the proposed algorithm are given in section \uppercase\expandafter{\romannumeral4}. In section \uppercase\expandafter{\romannumeral5}, the implementation of the proposed algorithm is derived and presented.
The  numerical example is shown in section \uppercase\expandafter{\romannumeral6} and the conclusion of this paper is in section \uppercase\expandafter{\romannumeral7}.
\end{comment}

\textit{Notation:}
%The notations in this paper are standard.
Let $\mathbb{R}^{n\times m}$ be the set of $n\times m$ real matrices.
Let $I$ denote an identity matrix with appropriate dimensions.
Notation $\mathcal{S}^{n}$ and $\mathcal{S}_+^{n}$ denote  the
set of symmetric positive definite real matrix and the
set of symmetric positive semidefinite real matrix, respectively, with dimensions $n\times n$.
Notation
$X>Y$, where $X$ and $Y$ are real symmetric matrices, means that the matrix
$X-Y$ is positive definite.
The superscript ``$\top$'' denotes the transpose for
vectors or matrices.
Let $\rho(.)$ be the spectral radius of matrices.
The trace of a square matrix $A$ is denoted by ${\rm tr}(A)$.
We use $\norm{\cdot}$ to denote the Euclidean norm for vectors.
Let $\otimes$ denote the Kronecker product.
${\rm E}$ denotes the mathematical expectation.
For symmetric matrix $X\in \mathbb{R}^{n\times n}$,
${\rm {vech}}(X)\in \mathbb{R}^\frac{n(n+1)}{2}$ denotes the vector whose elements are the $n$ diagonal entries of $X$
and the $\frac{n(n+1)}{2}-n$ distinct entry $[X]_{ij}$;
${\rm {vecs}}(X)\in \mathbb{R}^\frac{n(n+1)}{2}$ denotes
the vector whose elements are the $n$ diagonal entries of $X$ and
the $\frac{n(n+1)}{2}-n$ distinct sums $[X]_{ij}+[X]_{ji}$.
%$\overline{\sigma}(\cdot)$ and $\underline{\sigma}(\cdot)$ are,
%respectively, the maximum and the minimum singular values of square
%matrices.
%Moreover, let $(\Omega, \mathcal{F}, \mathrm{Pr})$ be a complete
%probability space.
%$\E\{\cdot\}$ and $\sigma\{\cdot\}$ stand for the
%expectation and the generated $\sigma$-algebra, respectively.

\section{problem description}
Consider the following linear discrete-time system
\begin{equation}  \label{eq1}
x_{k+1}=Ax_k+Bu_k+(Cx_k+Du_k)d_k+w_k
\end{equation}
where $x_k\in \mathbb{R}^{n}$ is the system state at time $k$,
$u_k\in \mathbb{R}^{m}$ is the control input,
$x_0$ is the system initial state following a Gaussian distribution with zero mean and covariance $X_0$.
The matrices $A$, $C\in \mathbb{R}^{n\times n}$, $B$, $D\in \mathbb{R}^{n\times m}$ are the system matrices. %constant real Assume that $(A,B)$ is controllable.
 $d_k\in \mathbb{R}$ is the system multiplicative noise,
 $w_k\in \mathbb{R}^{n}$ is the system additive noise.
The system noise sequence $\{(d_k, w_k): k=0,1,2,\ldots\}$
is defined on a given complete probability space $(\Omega, \mathcal{F}, \mathcal{P})$.
For convenience, it is further assumed that
\begin{enumerate}
        \item $d_k$ is scalar Gaussian random variable with zero mean and covariance 1;
        \item $w_k$ is Gaussian random vector with zero mean and covariance $W\in \mathcal{S}_+^{n}$;
        \item ${\rm E}(x_0d_i)=0$, ${\rm E}(x_0w_i)=0$, ${\rm E}(d_iw_j)=0$  $\forall i,j$.
\end{enumerate}

%Let $\underline{u}_k=(u_k,u_{k+1},\cdots)$ be a sequence of control actions from $k$ to $\infty$.
%Define $q_k={\rm E}x_k$, $X_k={\rm E}\{x_kx_k^{\top}\}$, then refer to \cite{costa2006discrete}, the definition of system stability is given.

\begin{defn}\cite{kubrusly1985mean}\label{def1}
System \eqref{eq1} with control input $u_k\equiv0$ is called asymptotically square stationary (ASS) if there exists $X\in \mathcal{S}_+^{n}$ such that $\|\underset{k\rightarrow\infty}{\lim}{\rm E}(x_kx_k^{\top})-X\|=0$ independent of the covariance matrix $X_0$.
%with respect to every initial state $x_0$.
\end{defn}

%\begin{lemmaNoParens}[\cite{kubrusly1985mean}]\label{lem1}
\begin{lem}\cite{kubrusly1985mean}\label{lem1}
System \eqref{eq1} with control input $u_k\equiv 0$ is ASS
if and only if
\begin{equation}\label{eq_lem1}
\rho(A\otimes A+C\otimes C)<1.
\end{equation}

\end{lem}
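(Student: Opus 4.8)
The plan is to reduce the ASS property to the stability of a single affine iteration on the matrix of second moments. First I would set $u_k\equiv0$ so that \eqref{eq1} becomes $x_{k+1}=(A+Cd_k)x_k+w_k$, and define $M_k:={\rm E}(x_kx_k^{\top})$. Since $x_k$ is a function only of $x_0$ and the past noises $\{d_0,\dots,d_{k-1},w_0,\dots,w_{k-1}\}$, it is independent of the current pair $(d_k,w_k)$. Combining this with ${\rm E}(d_k)=0$, ${\rm E}(d_k^2)=1$, ${\rm E}(w_k)=0$, ${\rm E}(w_kw_k^{\top})=W$ and the cross conditions in assumption~3, every cross term in the expansion of $x_{k+1}x_{k+1}^{\top}$ vanishes and I obtain the recursion
\begin{equation}\label{eq_secondmoment}
M_{k+1}=AM_kA^{\top}+CM_kC^{\top}+W .
\end{equation}

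Next I would vectorize \eqref{eq_secondmoment}. Using the identity ${\rm vec}(GXG^{\top})=(G\otimes G){\rm vec}(X)$, set $\mu_k:={\rm vec}(M_k)$ and $\mathcal{A}:=A\otimes A+C\otimes C$, which turns \eqref{eq_secondmoment} into the affine difference equation $\mu_{k+1}=\mathcal{A}\mu_k+{\rm vec}(W)$ with explicit solution $\mu_k=\mathcal{A}^{k}\mu_0+\sum_{j=0}^{k-1}\mathcal{A}^{j}{\rm vec}(W)$. The lemma then becomes a statement purely about this iteration: the system is ASS precisely when $\mu_k$ converges to a limit that does not depend on $\mu_0={\rm vec}(X_0)$.

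For sufficiency, if $\rho(\mathcal{A})<1$ then $\mathcal{A}^{k}\mu_0\to0$ and the Neumann series $\sum_{j\ge0}\mathcal{A}^{j}{\rm vec}(W)=(I-\mathcal{A})^{-1}{\rm vec}(W)$ converges, so $\mu_k$ tends to a fixed limit independent of $X_0$. De-vectorizing gives $M_k\to X$; since each $M_k$ is symmetric positive semidefinite and $\mathcal{S}_+^{n}$ is closed, $X\in\mathcal{S}_+^{n}$, which is exactly ASS. For necessity, suppose the system is ASS, so for every $X_0\in\mathcal{S}_+^{n}$ the trajectory $\mu_k$ converges to the \emph{same} limit. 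Taking two initial covariances $X_0^{(1)},X_0^{(2)}$ and subtracting their trajectories, the common affine term cancels and I get $\mathcal{A}^{k}\,{\rm vec}(X_0^{(1)}-X_0^{(2)})=\mu_k^{(1)}-\mu_k^{(2)}\to0$. As differences of positive semidefinite matrices exhaust the space of symmetric matrices, this yields $\mathcal{A}^{k}\nu\to0$ for every symmetric $\nu$, i.e. the restriction of $\mathcal{A}$ to the invariant symmetric subspace is stable.

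The step I expect to be the main obstacle is closing this necessity argument, since a priori the admissible initial data only range over the cone $\mathcal{S}_+^{n}$, so I have only controlled the spectral radius of $\mathcal{A}$ on the symmetric subspace, whereas the claim concerns $\rho(A\otimes A+C\otimes C)$ on the full $n^{2}$-dimensional space. I would resolve this by observing that the operator $\mathcal{L}(M):=AMA^{\top}+CMC^{\top}$ preserves symmetry and maps the cone $\mathcal{S}_+^{n}$ into itself, so by a Perron--Frobenius argument its spectral radius is attained at a positive semidefinite, hence symmetric, eigenvector; consequently the spectral radius of $\mathcal{A}$ on the symmetric subspace coincides with $\rho(A\otimes A+C\otimes C)$. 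Stability over positive semidefinite initial data therefore already forces $\rho(A\otimes A+C\otimes C)<1$. This invariant-cone fact is the crux of the equivalence, and is precisely the content of the cited result \cite{kubrusly1985mean}, which I would invoke to complete the proof.
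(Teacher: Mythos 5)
Your proposal is correct in outline and follows essentially the same route as the paper's own argument: reduce to the second-moment recursion $M_{k+1}=AM_kA^{\top}+CM_kC^{\top}+W$, vectorize it into the affine iteration $\mu_{k+1}=\mathcal{A}\mu_k+{\rm vec}(W)$ with $\mathcal{A}=A\otimes A+C\otimes C$, and read both implications off the explicit solution. Where you genuinely differ is in closing the necessity direction. The paper passes from ``$\mathcal{A}^{k}{\rm vec}(X_0)\to0$ for every $X_0\in\mathcal{S}_+^{n}$'' to ``$\rho(\mathcal{A})<1$'' in one line, silently ignoring that vectorized covariances span only the symmetric subspace; you correctly single this out as the crux and propose an invariant-cone (Perron--Frobenius) argument. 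The fact you need is true: for maps of the form $\mathcal{L}(M)=AMA^{\top}+CMC^{\top}$, the spectral radius over all of $\mathbb{R}^{n\times n}$ coincides with that of the restriction to symmetric matrices, and it is attained at a positive semidefinite eigenvector. Be aware, however, that your justification as written is not yet complete: Krein--Rutman applies to the cone $\mathcal{S}_+^{n}$ inside the space $\mathcal{S}^{n}$, where it is solid, and therefore only produces a PSD eigenvector for $\rho(\mathcal{A}|_{\mathcal{S}^{n}})$; by itself it does not exclude a larger eigenvalue supported on the skew-symmetric block (equivalently, on non-Hermitian matrices after complexification). The standard way to finish is to complexify and polarize: every complex matrix is a combination of four Hermitian PSD matrices, and positivity plus order-monotonicity give $0\le\mathcal{L}^{k}(P)\le\norm{P}\,\mathcal{L}^{k}(I)$, so $\mathcal{L}^{k}(I)\to0$ (which ASS gives you by taking $X_0=I$) forces $\mathcal{L}^{k}\to0$ on the whole space and hence $\rho(\mathcal{A})<1$. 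Since you explicitly defer exactly this step to the cited reference \cite{kubrusly1985mean} --- which is also the source the paper itself relies on --- your outline stands, and it is in fact more candid than the paper's proof, which does not acknowledge the issue at all.
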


For system \eqref{eq1}, as stated in \cite{1099867}, the optimal control policy is linear, where the optimal control gain to be designed.

\begin{defn}\label{def2}
A control policy is called admissible for system \eqref{eq1} if the system with the control policy is ASS.
\end{defn}

\begin{rem}\label{rem_Exx}
For system \eqref{eq1} under an admissible control policy $u=Lx$,
 one has
\begin{align}%\label{}
\nonumber {\rm E}(x_{k+1}x_{k+1}^{\top})&=(A+BL){\rm E}(x_kx_k^{\top})(A+BL)^{\top}\\
\nonumber&\quad+(C+DL){\rm E}(x_kx_k^{\top})(C+DL)^{\top}+W.
\end{align}
Obviously, ${\rm E}(x_kx_k^{\top})$ is positive definite due to the positive definiteness of $W$.
\end{rem}

%\begin{rem}\label{rem2}%for def3
%\underline{A control policy $u=Lx$ is admissible }is equivalent to the control gain $L$ is admissible.
%The existence of admissible control policies for system \eqref{eq1} depends on the system matrices.
%\end{rem}

\begin{lem} \cite{kubrusly1985mean} \label{lem2}%for system \eqref{eq1}
A control policy $u=Lx$ is admissible if and only if the following algebraic equation has a unique solution $P\in \mathcal{S}_{+}^n$ for any given $F\in \mathcal{S}_{+}^n$:
\begin{equation}\label{eq_lem2_1}
P=(A+BL)^{\top}P(A+BL)+(C+DL)^{\top}P(C+DL)+F.
\end{equation}
\end{lem}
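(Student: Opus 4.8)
The plan is to reduce the statement to a spectral condition by means of Lemma~\ref{lem1}, and then to prove the resulting Lyapunov-type equivalence in two directions. Writing $\bar A\triangleq A+BL$ and $\bar C\triangleq C+DL$, the closed-loop system under $u=Lx$ is exactly system~\eqref{eq1} with $A,C$ replaced by $\bar A,\bar C$ and with zero control input. Hence, by Definition~\ref{def2} and Lemma~\ref{lem1}, the policy $u=Lx$ is admissible if and only if $\rho(G)<1$, where $G\triangleq \bar A\otimes\bar A+\bar C\otimes\bar C$. It therefore suffices to show that $\rho(G)<1$ holds if and only if \eqref{eq_lem2_1} admits a unique solution $P\in\mathcal{S}_+^n$ for every $F\in\mathcal{S}_+^n$.

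For the ``only if'' part I would vectorize \eqref{eq_lem2_1}. Using ${\rm vec}(MXN)=(N^\top\otimes M){\rm vec}(X)$, equation \eqref{eq_lem2_1} becomes $(I-G^\top){\rm vec}(P)={\rm vec}(F)$ with $G^\top=\bar A^\top\otimes\bar A^\top+\bar C^\top\otimes\bar C^\top$. Since $\rho(G^\top)=\rho(G)<1$, the value $1$ is not an eigenvalue of $G^\top$, so $I-G^\top$ is nonsingular and the solution ${\rm vec}(P)=(I-G^\top)^{-1}{\rm vec}(F)$ is unique. To see that $P\in\mathcal{S}_+^n$, I would expand the inverse as the convergent Neumann series $\sum_{k\ge0}(G^\top)^k{\rm vec}(F)$, which unvectorizes to $P=\sum_{k\ge0}\mathcal{L}^k(F)$ with $\mathcal{L}(X)\triangleq\bar A^\top X\bar A+\bar C^\top X\bar C$. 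Because $\mathcal{L}$ maps $\mathcal{S}_+^n$ into itself and $F\succeq0$, every partial sum is positive semidefinite, hence so is the limit $P$.

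For the ``if'' part, which I expect to be the main obstacle, I would specialize to $F=I$. The unique solution then satisfies $P=\mathcal{L}(P)+I\succeq I$, so $P$ is positive definite. The key device is the trace duality between $\mathcal{L}$ and its adjoint $\mathcal{M}(X)\triangleq\bar A X\bar A^\top+\bar C X\bar C^\top$, namely ${\rm tr}(\mathcal{M}(X)P)={\rm tr}(X\mathcal{L}(P))$, which follows from the cyclic property of the trace. Applying this to the homogeneous second-moment recursion $X_{k+1}=\mathcal{M}(X_k)$ (cf. Remark~\ref{rem_Exx} with $W=0$) and using $\mathcal{L}(P)=P-I$ yields ${\rm tr}(PX_{k+1})={\rm tr}(PX_k)-{\rm tr}(X_k)$. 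Thus ${\rm tr}(PX_k)$ is nonincreasing and bounded below by $0$, so it converges and ${\rm tr}(X_k)\to0$; since $X_k\succeq0$, this forces $X_k\to0$ for every initial $X_0\succeq0$.

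The remaining and most delicate step is to convert $X_k\to0$ on the cone $\mathcal{S}_+^n$ into $\rho(G)<1$. In vectorized form this reads $G^k{\rm vec}(X_0)\to0$ for all $X_0\succeq0$; since every symmetric matrix is a difference of two positive semidefinite ones and the symmetric subspace is $G$-invariant, I obtain decay of $G^k$ on the whole symmetric subspace. As the covariance $X_0$ and the noise covariance $W$ are symmetric positive semidefinite, the inhomogeneous recursion ${\rm vec}(X_{k+1})=G\,{\rm vec}(X_k)+{\rm vec}(W)$ then converges to a limit independent of $X_0$, which is precisely the ASS property of Definition~\ref{def1}; equivalently, one may invoke that for the completely positive map $\mathcal{M}$ the spectral radius is attained on the positive cone, so $\rho(G)<1$. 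By the reduction above, the policy is admissible, completing the equivalence. The crux of the argument is this passage from decay on the positive cone to the spectral condition, together with the correct set-up of the adjoint identity that turns $P$ into a Lyapunov certificate.
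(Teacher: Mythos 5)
First, a point of reference: the paper never proves this lemma---it is imported verbatim from \cite{kubrusly1985mean}---so your attempt cannot be compared with an in-paper argument and must be judged on its own terms. Your overall strategy is sound. The reduction (Definition~\ref{def2} plus Lemma~\ref{lem1} applied to the closed-loop matrices $\bar A=A+BL$, $\bar C=C+DL$) is exactly right, and your ``only if'' direction is complete: vectorization gives $(I-G^{\top}){\rm vec}(P)={\rm vec}(F)$, invertibility follows from $\rho(G)<1$, and the Neumann series $P=\sum_{k\ge0}\mathcal{L}^{k}(F)$ certifies symmetry and positive semidefiniteness. In the ``if'' direction, the choice $F=I$ (so $P\succeq I$) and the telescoping identity ${\rm tr}(PX_{k+1})={\rm tr}(PX_k)-{\rm tr}(X_k)$ are also the right tools, and they do yield $\mathcal{M}^{k}(X_0)\to0$ for every $X_0\succeq0$.

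The gap is precisely where you flagged it. Pointwise decay of $G^{k}$ on the symmetric subspace does not, by itself, license your claim that the inhomogeneous recursion ${\rm vec}(X_{k+1})=G\,{\rm vec}(X_k)+{\rm vec}(W)$ converges: for that you need summability of $\sum_{j}G^{j}{\rm vec}(W)$, i.e.\ geometric decay. Your fallback---``the spectral radius of a completely positive map is attained on the cone''---is a real theorem (Krein--Rutman/Evans--H\o egh-Krohn type), but as stated it only controls the restriction of $G$ to symmetric matrices; concluding $\rho(G)<1$ on all of $\mathbb{R}^{n^2}$, which is what Lemma~\ref{lem1} asks for, additionally requires handling the skew-symmetric part (e.g.\ via the Cauchy--Schwarz bound $\|\mathcal{M}^{k}(uv^{\top})\|\le\bigl(\|\mathcal{M}^{k}(uu^{\top})\|\,\|\mathcal{M}^{k}(vv^{\top})\|\bigr)^{1/2}$). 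Both holes close with tools you already have, and the cleanest patch avoids spectral radii entirely: apply your telescoping identity with initial condition $W$ to get $\sum_{k\ge0}{\rm tr}\bigl(\mathcal{M}^{k}(W)\bigr)\le{\rm tr}(PW)<\infty$; since $\|X\|\le{\rm tr}(X)$ for $X\succeq0$, the series $\sum_{k\ge0}\mathcal{M}^{k}(W)$ converges absolutely, so ${\rm E}(x_kx_k^{\top})=\mathcal{M}^{k}(X_0)+\sum_{j=0}^{k-1}\mathcal{M}^{j}(W)$ converges to a positive semidefinite limit independent of $X_0$. That is exactly ASS in the sense of Definition~\ref{def1}, hence admissibility by Definition~\ref{def2}, and the ``if'' direction is done. (Alternatively, to make your original route rigorous: if $G^{k}v\to0$ for every $v$ in a $G$-invariant subspace $V$, then $\rho(G|_V)<1$, since an eigenvalue $\lambda$ of $G|_V$ with $|\lambda|\ge1$ and complexified eigenvector $v_1+iv_2$, $v_1,v_2\in V$, would contradict $G^{k}v_1\to0$ and $G^{k}v_2\to0$.)
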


The following assumption is essential throughout this paper.

\begin{ass}\label{ass1}
There exist linear admissible control policies for system \eqref{eq1}.
\end{ass}

For an admissible control policy $u=Lx$, define the cost function as
\begin{equation}\label{eq2}
V(x_k)={\rm E}
\big( \sum_{i=k}^{\infty}\gamma^{i-k}c(x_i,u_i) \big),
\end{equation}
where $c(x_i,u_i)\geq 0$ is called one step cost at time $i$ and $1>\gamma\geq0$ is a discount factor.
Usually, the one step cost is given by $c(x_i,u_i)=x_i^{\top}Qx_i+u_i^{\top}Ru_i$ with $Q\in \mathcal{S}^n$ and $R\in \mathcal{S}_{+}^m$.

Define $U_{{\rm ad}}$ as the set containing all the admissible control policies for system \eqref{eq1}.
The stochastic LQR problem considered in this paper is to find an optimal admissible control policy in the sense of minimizing the cost function $V(x_0)$.
% and simultaneously the state of system (1) can be in a stationary distribution, i.e. $\underset{k\rightarrow\infty}{\lim} {\rm E}(x_k^\top x_k)<\infty$.
 The optimal cost function is given by
\begin{equation}%\label{eq3}
\nonumber V^{\ast}(x_k)=\min_{u\in U_{\rm ad}}V(x_k).
\end{equation}

\begin{defn}\label{def4}
\cite{WANG2016379} The stochastic LQR problem is called well-posed if the optimal cost function satisfies $-\infty\leq V^{\ast}(x_k)\leq\infty$.
\end{defn}

%It is important whether the stochastic LQR problem is well-posed or not, thus the following lemma is provided.

\begin{lem}\label{lem3}% for system \eqref{eq1}
If the control policy $u=Lx$ is admissible, then the stochastic LQR problem
is well-posed and the corresponding cost function is
\begin{equation}\label{eq4}
V(x_k)={\rm E}(x_k^{\top}Px_k)+\frac{\gamma}{1-\gamma}
{\rm {tr}}(PW),
\end{equation}
where $P\in \mathcal{S}_{+}^n$ is the unique solution to the stochastic
Lyapunov equation (SLE)
\begin{align}\label{eq5}
\nonumber P&=\gamma(A+BL)^{\top}P(A+BL)+\gamma (C+DL)^{\top}P\\
&\quad \times (C+DL)+L^{\top}RL+Q.
\end{align}

\end{lem}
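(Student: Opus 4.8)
The plan is to pass to the closed-loop data and reduce the cost to a second-moment recursion. Write $\bar{A}=A+BL$, $\bar{C}=C+DL$, and $\tilde{Q}=Q+L^{\top}RL$, so that $c(x_i,u_i)=x_i^{\top}\tilde{Q}x_i$ with $\tilde{Q}\in\mathcal{S}^{n}$ (since $Q\in\mathcal{S}^{n}$ and $L^{\top}RL\in\mathcal{S}_{+}^{n}$), and \eqref{eq5} becomes $P=\gamma\bar{A}^{\top}P\bar{A}+\gamma\bar{C}^{\top}P\bar{C}+\tilde{Q}$. I would first settle existence and uniqueness of $P$ by vectorizing this identity. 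Using ${\rm vec}(\bar{A}^{\top}P\bar{A})=(\bar{A}^{\top}\otimes\bar{A}^{\top}){\rm vec}(P)$ one obtains $(I-\gamma G^{\top}){\rm vec}(P)={\rm vec}(\tilde{Q})$ with $G=\bar{A}\otimes\bar{A}+\bar{C}\otimes\bar{C}$. Applying Lemma \ref{lem1} to the closed-loop system (whose matrices are $\bar{A},\bar{C}$), admissibility of $u=Lx$ gives $\rho(G)<1$, hence $\rho(\gamma G^{\top})=\gamma\rho(G)<1$, so $I-\gamma G^{\top}$ is invertible and $P$ is unique. Expanding the inverse as a Neumann series yields $P=\sum_{j\ge0}\gamma^{j}(\mathcal{A}^{\ast})^{j}(\tilde{Q})$ with $\mathcal{A}^{\ast}(X)=\bar{A}^{\top}X\bar{A}+\bar{C}^{\top}X\bar{C}$; every summand is positive semidefinite and the $j=0$ term is $\tilde{Q}$, so $P\in\mathcal{S}^{n}\subset\mathcal{S}_{+}^{n}$.

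Next I would set $X_i={\rm E}(x_ix_i^{\top})$ and $\mathcal{A}(X)=\bar{A}X\bar{A}^{\top}+\bar{C}X\bar{C}^{\top}$, so that Remark \ref{rem_Exx} gives $X_{i+1}=\mathcal{A}(X_i)+W$ and the cost rewrites as $V(x_k)=\sum_{i=k}^{\infty}\gamma^{i-k}{\rm tr}(\tilde{Q}X_i)$. Vectorizing the recursion gives ${\rm vec}(X_{i+1})=G\,{\rm vec}(X_i)+{\rm vec}(W)$, and $\rho(G)<1$ makes $\{X_i\}$ bounded (in fact convergent). Since $c\ge0$ the series is well-defined in $[0,\infty]$, and boundedness of $\{X_i\}$ together with $\gamma<1$ forces it to be finite, which is the well-posedness of Definition \ref{def4}.

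The closed form I would then extract by a telescoping argument at the level of partial sums. The SLE reads $\tilde{Q}=P-\gamma\mathcal{A}^{\ast}(P)$, and the adjoint identity ${\rm tr}(\mathcal{A}^{\ast}(P)X_i)={\rm tr}(P\mathcal{A}(X_i))={\rm tr}(P(X_{i+1}-W))$ gives $\gamma^{i-k}{\rm tr}(\tilde{Q}X_i)=\gamma^{i-k}{\rm tr}(PX_i)-\gamma^{i-k+1}{\rm tr}(PX_{i+1})+\gamma^{i-k+1}{\rm tr}(PW)$. Summing $i=k,\dots,N$, the first two terms telescope to ${\rm tr}(PX_k)-\gamma^{N-k+1}{\rm tr}(PX_{N+1})$ while the last is a finite geometric sum. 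Letting $N\to\infty$, the boundary term $\gamma^{N-k+1}{\rm tr}(PX_{N+1})$ vanishes by boundedness of $\{X_i\}$ and $\gamma<1$, and the geometric part tends to $\tfrac{\gamma}{1-\gamma}{\rm tr}(PW)$, leaving $V(x_k)={\rm tr}(PX_k)+\tfrac{\gamma}{1-\gamma}{\rm tr}(PW)={\rm E}(x_k^{\top}Px_k)+\tfrac{\gamma}{1-\gamma}{\rm tr}(PW)$, which is \eqref{eq4}.

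The main obstacle is the passage to the infinite horizon rather than the algebra: each manipulation is routine, but exchanging the limit with the telescoping requires the boundary term to die, and this rests entirely on the admissibility hypothesis through $\rho(G)<1$. I therefore expect the only care needed is to carry out the computation on finite partial sums and invoke second-moment boundedness, so that convergence (well-posedness) and the closed form emerge together.
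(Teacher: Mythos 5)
Your proof is correct, and it follows the same telescoping skeleton as the paper but in a different representation. The paper works pathwise with conditional expectations: it computes ${\rm E}(x_{k+1}^{\top}Px_{k+1}\mid x_k)$ from the closed-loop dynamics, uses SLE \eqref{eq5} to turn the one-step cost into the telescoping difference ${\rm E}(-\gamma x_{k+1}^{\top}Px_{k+1}+x_k^{\top}Px_k+\gamma\,{\rm tr}(PW)\mid x_k)$, sums over the horizon, and simply asserts that the boundary term $\lim_{i\to\infty}{\rm E}(\gamma^{i}x_{k+i}^{\top}Px_{k+i}\mid x_k)$ vanishes ``due to the admissibility''; existence, uniqueness and definiteness of $P$ are taken for granted there (Lemma \ref{lem2} is stated only for the undiscounted equation). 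You instead project everything onto the second moments $X_i={\rm E}(x_ix_i^{\top})$ and run the same telescoping in trace form via the adjoint identity ${\rm tr}(\mathcal{A}^{\ast}(P)X_i)={\rm tr}(P\mathcal{A}(X_i))$. This buys three things the paper leaves implicit: (i) existence, uniqueness and definiteness of $P$ for the discounted equation, via vectorization and the Neumann series (and with the paper's convention that $Q\in\mathcal{S}^{n}$ is positive definite, your bound $P\geq\tilde{Q}>0$ indeed places $P$ in $\mathcal{S}_{+}^{n}$); (ii) a concrete justification that the boundary term $\gamma^{N-k+1}{\rm tr}(PX_{N+1})$ dies, from boundedness of $\{X_i\}$ under $\rho(G)<1$; (iii) the exchange of expectation and infinite sum by nonnegativity of the running cost. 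The only gloss is at the very end: well-posedness in Definition \ref{def4} is a statement about $V^{\ast}$, not about $V$, so you should add the one line the paper has, namely $0\leq V^{\ast}(x_k)\leq V(x_k)<\infty$, to close the argument.
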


\begin{proof}
Substituting $u=Lx$ into system \eqref{eq1} leads to
\begin{align}
\nonumber {\rm E}& (x_{k+1}^{\top}Px_{k+1}\mid x_k)\\
\nonumber &={\rm E}\Big( \big(Ax_k+BLx_k+(Cx_k+DLx_k)d_k+w_k\big)^{\top}P\\
\nonumber &\quad \times\big(Ax_k+BLx_k+(Cx_k+DLx_k)d_k+w_k\big) \mid x_k\Big)\\
\nonumber &=x_k^{\top}\big((A+BL)^{\top}P(A+BL)+(C+DL)^{\top}P\\
\nonumber &\quad \times(C+DL)\big)x_k+{\rm {tr}}(PW).
\end{align}

Noting that $P$ satisfies SLE \eqref{eq5}, one has
\begin{align}
\nonumber {\rm E}&  (\gamma x_{k+1}^{\top}Px_{k+1}-x_k^{\top}Px_k\mid x_k)\\
\nonumber &=x_k^{\top}\big(\gamma(A+BL)^{\top}P(A+BL)+\gamma (C+DL)^{\top}P\\
\nonumber &\quad \times(C+DL)-P\big)x_k+\gamma {\rm {tr}}(PW)\\
\nonumber  &=-x_k^{\top}(Q+L^{\top}RL)x_k+\gamma {\rm {tr}}(PW),
\end{align}

which means that
\begin{align}
\nonumber x_k&^{\top}(Q+L^{\top}RL)x_k\\
\nonumber   &={\rm E}\big(-\gamma x_{k+1}^{\top}Px_{k+1}+x_k^{\top}Px_k+\gamma {\rm {tr}}(PW)\mid x_k\big).
\end{align}

%Thus
%\begin{align}
%\nonumber \sum_{i=k}^{\infty}&\gamma^{i-k}x_i^{\top}(Q+L^{\top}RL)x_i\\
%\nonumber &=\sum_{i=k}^{\infty} {\rm E}\big(
%- \gamma^{i+1-k}x_{i+1}^{\top}Px_{i+1}
%                                  + \gamma^{i-k}x_{i}^{\top}Px_{i}  \\
%\nonumber  &\qquad + \gamma^{i+1-k}{\rm {tr}}(PW)\mid x_i     \big).
%\end{align}

Therefore,
\begin{align}\label{lem3_1}
\nonumber {\rm E}
&\big(\sum_{i=k}^{\infty}\gamma^{i-k}x_i^{\top}
(Q+L^{\top}RL)x_i\mid x_k\big)\\
\nonumber &={\rm E}\big(\sum_{i=k}^{\infty}{\rm E}
(- \gamma^{i+1-k}x_{i+1}^{\top}Px_{i+1}                                  + \gamma^{i-k}x_{i}^{\top}Px_{i}\\
\nonumber &\quad+ \gamma^{i+1-k}{\rm {tr}}(PW)\mid x_i     ) \mid x_k\big)\\
\nonumber &=x_k^{\top}Px_k-\lim_{i\rightarrow \infty}{\rm E}( \gamma^i x_{k+i}^{\top}Px_{k+i}\mid x_k) + \frac{\gamma}{1-\gamma}{\rm {tr}}(PW)\\
 &=x_k^{\top}Px_k+ \frac{\gamma}{1-\gamma}{\rm {tr}}(PW),
\end{align}
where $\underset{i\rightarrow \infty}{\lim}{\rm E}( \gamma^i x_{k+i}^{\top}Px_{k+i}\mid x_k)=0$
 due to the admissibility of the control policy.
Combining \eqref{lem3_1} and \eqref{eq2},  one can obtain equation \eqref{eq4}.
%Therefore,
%\begin{align}
%\nonumber V(x_k)&={\rm E}\big(\sum_{i=k}^{\infty}\gamma^{i-k}
%x_i^{\top}(Q+L^{\top}RL)x_i\big)\\
%\nonumber &={\rm E}(x_k^{\top}Px_k)+\frac{\gamma}{1-\gamma}{\rm {tr}}(PW).
%\end{align}
Because the optimal cost function satisfies $V^{\ast}(x_k)\leq V(x_k)$,
one has $V^{\ast}(x_k)<+\infty$.
Moreover, in view of \eqref{eq2}, it is obvious that $V^{\ast}(x_k)\geq 0$.
Therefore, the stochastic LQR problem is well-posed.
The proof is completed.
\end{proof}

\begin{rem}\label{rem3_1}
Due to the existence of additive noise, the cost function in \eqref{eq4} contains the term $\frac{\gamma}{1-\gamma}
{\rm {tr}}(PW)$ which is independent of $x_k$.
%, which also appears in \cite[section 4.2.1]{pmlr-v80-tu18a}.
When system \eqref{eq1} does not suffer from additive noise ($w_k\equiv 0$) and $\gamma =1$,
 Lemma \ref{lem3} reduces to \cite[Lemma 1]{WANG2016379}.
 Compared with \cite{WANG2016379},
an proper discount factor is used to guarantee
the  boundedness of the cost function in our case.
\end{rem}

\begin{rem}\label{rem3}
Note that the existence of a unique positive definite solution to SLE \eqref{eq5}
cannot guarantee the admissibility of $L$.
Define $M=(A+BL)\otimes (A+BL)+(C+DL)\otimes (C+DL)$.
For a control policy $u=Lx$, only $\gamma\rho(M)<1$ can be derived if there exists a unique solution $P\in \mathcal{S}_+^n$ to SLE \eqref{eq5}, while from  \cite[Theorem 1]{kubrusly1985mean}, the admissibility of the control policy requires $\rho(M)<1$.
\end{rem}

The following lemma provides a sufficient condition for testing the admissibility of a control policy.

\begin{lem}\label{lem4}% for system \eqref{eq1}
A control policy $u=Lx$ is admissible if there is a unique solution $P\in \mathcal{S}_+^n$ to SLE \eqref{eq5} and $P<\frac{Q+L^{\top}RL}{1-\gamma}$.
\end{lem}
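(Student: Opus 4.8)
The plan is to prove the closed-loop system \eqref{eq1} under $u=Lx$ is ASS and then invoke Lemma~\ref{lem1}. Writing $\bar A=A+BL$ and $\bar C=C+DL$, the closed loop $x_{k+1}=\bar A x_k+\bar C x_k d_k+w_k$ is of the form \eqref{eq1} with zero control, so by Lemma~\ref{lem1} (and Remark~\ref{rem3}) admissibility is equivalent to $\rho(M)<1$ with $M=\bar A\otimes\bar A+\bar C\otimes\bar C$. Hence it suffices to deduce $\rho(M)<1$ from the two hypotheses.

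First I would extract two consequences of SLE~\eqref{eq5}. Since its right-hand side equals $\gamma\bar A^{\top}P\bar A+\gamma\bar C^{\top}P\bar C+Q+L^{\top}RL$ and the first two terms are positive semidefinite, we get $P\geq Q+L^{\top}RL>0$ (using $Q\in\mathcal{S}^{n}$), so $P$ is in fact positive definite. Next, taking $\gamma>0$ (the degenerate case $\gamma=0$ turns the hypothesis into $P<\frac{Q+L^{\top}RL}{1}=Q+L^{\top}RL=P$, which is impossible, so the lemma holds vacuously there), I would rearrange \eqref{eq5} into a strict stochastic Lyapunov inequality:
\begin{equation}
\nonumber \bar A^{\top}P\bar A+\bar C^{\top}P\bar C-P=\frac{1}{\gamma}\big((1-\gamma)P-(Q+L^{\top}RL)\big).
\end{equation}
The hypothesis $P<\frac{Q+L^{\top}RL}{1-\gamma}$ is precisely $(1-\gamma)P-(Q+L^{\top}RL)<0$, so the right-hand side is negative definite. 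Thus there is a positive definite $P$ satisfying $\bar A^{\top}P\bar A+\bar C^{\top}P\bar C<P$.

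The core step is then to conclude $\rho(M)<1$ from the existence of such a $P$. I would argue on the homogeneous second-moment recursion $X_{k+1}=\bar A X_k\bar A^{\top}+\bar C X_k\bar C^{\top}$ (the map whose vectorization is $M$, exactly as in the proof of Lemma~\ref{lem1}), treating ${\rm tr}(PX_k)$ as a Lyapunov function. Since $P-(\bar A^{\top}P\bar A+\bar C^{\top}P\bar C)>0$ and $P>0$, there exists $\epsilon\in(0,1]$ with $\bar A^{\top}P\bar A+\bar C^{\top}P\bar C\leq(1-\epsilon)P$. Pairing with $X_k\geq0$ via the trace and using ${\rm tr}\big((\bar A^{\top}P\bar A+\bar C^{\top}P\bar C)X_k\big)={\rm tr}\big(P(\bar A X_k\bar A^{\top}+\bar C X_k\bar C^{\top})\big)={\rm tr}(PX_{k+1})$ yields ${\rm tr}(PX_{k+1})\leq(1-\epsilon){\rm tr}(PX_k)$, hence ${\rm tr}(PX_k)\leq(1-\epsilon)^{k}{\rm tr}(PX_0)\to0$. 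Because $P>0$ and $X_k\geq0$, this forces $X_k\to0$ for every $X_0\geq0$; as the positive semidefinite cone spans the symmetric matrices, the linear map $X\mapsto\bar A X\bar A^{\top}+\bar C X\bar C^{\top}$ is power-convergent to $0$, i.e.\ $\rho(M)<1$. Lemma~\ref{lem1} then gives that $L$ is admissible.

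I expect the main obstacle to be this last implication, turning the one-shot strict inequality into the spectral bound $\rho(M)<1$, since it is a Stein-type Lyapunov statement for an operator acting on matrices rather than a scalar estimate. The delicate points are extracting the geometric rate $\epsilon$ from the strict definiteness and justifying the passage from ${\rm tr}(PX_k)\to0$ to $X_k\to0$ uniformly over the cone; the positivity $P>0$ obtained from \eqref{eq5} is exactly what makes ${\rm tr}(PX_k)$ a legitimate Lyapunov function and drives this final argument.
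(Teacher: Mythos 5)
Your opening move coincides exactly with the paper's proof: both rewrite SLE \eqref{eq5} as the Stein equation $P=\mathcal{V}(P)+\bar F$, where, in your notation $\bar A=A+BL$, $\bar C=C+DL$, one sets $\mathcal{V}(S)=\bar A^{\top}S\bar A+\bar C^{\top}S\bar C$ and $\bar F=\frac{1}{\gamma}\left(L^{\top}RL+Q-(1-\gamma)P\right)$, and both observe that the hypothesis $P<\frac{Q+L^{\top}RL}{1-\gamma}$ is precisely what makes $\bar F$ positive definite. The difference is in how this is converted into admissibility: the paper stops here and invokes \cite[Theorem 1]{kubrusly1985mean} (the fact recorded as Lemma \ref{lem2}), while you set out to re-prove that implication from scratch. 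Your Lyapunov argument for it is essentially sound as far as it goes: the rate $\epsilon\in(0,1]$ with $\mathcal{V}(P)\leq(1-\epsilon)P$ exists because $P-\mathcal{V}(P)>0$ and $P>0$; the adjoint identity ${\rm tr}\left(\mathcal{V}(P)X_k\right)={\rm tr}(PX_{k+1})$ is correct; and ${\rm tr}(PX_k)\to 0$ together with $P>0$, $X_k\geq 0$ does force $X_k\to 0$. This is a more self-contained route; what the paper's citation buys is skipping exactly this work.

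The gap is in your final inference. Power convergence of $X\mapsto\bar AX\bar A^{\top}+\bar CX\bar C^{\top}$ on real \emph{symmetric} matrices --- which is all that ``the PSD cone spans the symmetric matrices'' yields --- only bounds the spectral radius of $M=\bar A\otimes\bar A+\bar C\otimes\bar C$ restricted to the symmetric subspace. But $M$ acts on all of $\mathbb{R}^{n\times n}$, and the skew-symmetric matrices form a complementary invariant subspace ($\mathcal{T}(K)^{\top}=-\mathcal{T}(K)$ when $K^{\top}=-K$) that your argument never touches, so $\rho(M)<1$ does not yet follow. The claim you need is true, but requires one more step; any of the following closes it: (i) complexify --- your trace inequality holds verbatim for complex Hermitian positive semidefinite $X$ (the ordering $\mathcal{V}(P)\leq(1-\epsilon)P$ persists in the Hermitian order and the map preserves Hermitian positive semidefiniteness), Hermitian matrices span all complex matrices over $\mathbb{C}$, and $\mathbb{C}$-linearity then gives $M^k\to 0$ on everything, i.e.\ $\rho(M)<1$; (ii) bypass $\rho(M)$ and Lemma \ref{lem1} altogether: ASS in Definition \ref{def1} only concerns the positive semidefinite moment recursion $X_{k+1}=\bar AX_k\bar A^{\top}+\bar CX_k\bar C^{\top}+W$, and your geometric estimate shows both that the homogeneous part vanishes for every $X_0\geq 0$ and that the series of iterates of the map applied to $W$ converges absolutely (since $\|\cdot\|$ is controlled by ${\rm tr}(P\,\cdot)$), so $X_k$ tends to a limit independent of $X_0$; or (iii) simply cite Lemma \ref{lem2}, which is the paper's choice.
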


\begin{proof}
For any $S\in \mathcal{S}_+^n$, define an operator $\mathcal{V}$:
 $\mathcal{S}_+^n\rightarrow \mathcal{S}^n$ by
\begin{equation}
\nonumber \mathcal{V}(S)=(A+BL)^{\top}S(A+BL)+(C+DL)^{\top}S(C+DL).
\end{equation}
If there is a unique solution $P\in \mathcal{S}_+^n$ to SLE \eqref{eq5},
 one has
\begin{equation}\label{lem4_1}
P=\mathcal{V}(P)+\frac{L^{\top}RL+Q-(1-\gamma)P}{\gamma}.
\end{equation}
Considering $P<\frac{Q+L^{\top}RL}{1-\gamma}$, one has that $(L^{\top}RL+Q-(1-\gamma)P)\in\mathcal{S}_+^n$.
Let $\bar{F}=
\frac{L^{\top}RL+Q-(1-\gamma)P}{\gamma}\in\mathcal{S}_+^n$,
 then equation \eqref{lem4_1} becomes
\begin{equation}
\nonumber P=\mathcal{V}(P) +\bar{F}.
\end{equation}
It follows from \cite[Theorem 1]{kubrusly1985mean} that
 the control policy $u=Lx$ is admissible.
\end{proof}

Based on the definition of $V(x_k)$, one has
\begin{align}
\nonumber V(x_k)={\rm E}\big(c(x_k,u_k)\big)+\gamma {\rm E}
\big(\sum_{i=k+1}^{\infty}\gamma^{i-k-1}c(x_i,u_i)\big),
\end{align}
which yields a Bellman equation for cost function:
\begin{align}\label{eq6}
V(x_k)={\rm E}\big(c(x_k,u_k)\big)+\gamma V(x_{k+1}).
\end{align}
Substituting cost function \eqref{eq4} and $ {\rm E}\big(c(x_k,u_k)\big)={\rm E}(x_k^{\top}Qx_k+u_k^{\top}Ru_k)$ into equation \eqref{eq6},
the Bellman equation in terms of
the cost function kernel matrix $P$ is obtained as
\begin{align}\label{eq7}
\nonumber {\rm E}(x_k^{\top}Px_k)&={\rm E}(x_k^{\top}Qx_k+u_k^{\top}Ru_k)\\
&\quad +\gamma{\rm E}(x_{k+1}^{\top}Px_{k+1})-\gamma {\rm {tr}}(PW).
\end{align}
Define the Hamiltonian
\begin{align}
\nonumber H(x_k,u_k)&={\rm E}(x_k^{\top}Qx_k+u_k^{\top}Ru_k)
+\gamma{\rm E}(x_{k+1}^{\top}Px_{k+1})\\
\nonumber&\quad -{\rm E}(x_k^{\top}Px_k)-\gamma {\rm {tr}}(PW),
\end{align}
or equivalently,
 \begin{align}
\nonumber H(x_k,L)&={\rm E}\big(x_k^{\top}(Q+L^{\top}RL)x_k\big)
+\gamma{\rm E}(x_{k+1}^{\top}Px_{k+1})\\
\nonumber&\quad -{\rm E}(x_k^{\top}Px_k)-\gamma {\rm {tr}}(PW).
\end{align}
The next lemma shows that the stochastic LQR problem can be solved based on a stochastic algebra Riccati
equation (SARE).

\begin{lem}\label{lem5}
Under Assumption \ref{ass1}, the optimal control policy for the stochastic LQR problem is
\begin{align}\label{eq_lem5_0}
u^{\ast}=L^{\ast}x,
\end{align}
where the optimal control gain is computed as
\begin{equation}\label{eq_lem5_1}
L^{\ast}=-(R+\gamma B^{\top}P^{\ast}B+\gamma D^{\top}P^{\ast}D)^{-1}
(\gamma B^{\top}P^{\ast}A+\gamma D^{\top}P^{\ast}C),
\end{equation}
and $P^{\ast}\in \mathcal{S}_+^n$ is the unique solution to the following SARE
\begin{align}\label{eq_lem5_2}
\nonumber P&^{\ast}=Q+\gamma A^{\top}P^{\ast}A+\gamma C^{\top}P^{\ast}C
-(\gamma A^{\top}P^{\ast}B+\gamma C^{\top}P^{\ast}D)\\
        &\times(R+\gamma B^{\top}P^{\ast}B+\gamma D^{\top}P^{\ast}D)^{-1}
        (\gamma B^{\top}P^{\ast}A+\gamma D^{\top}P^{\ast}C).
\end{align}

\end{lem}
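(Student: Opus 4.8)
The plan is to apply the Bellman optimality principle together with a completing-the-square argument, in the style of a verification theorem. Since the optimal policy is linear (\cite{1099867}) and Lemma \ref{lem3} shows that every admissible linear policy produces a cost of the form ${\rm E}(x_k^{\top}Px_k)+\frac{\gamma}{1-\gamma}{\rm tr}(PW)$, I would posit that the optimal cost is $V^{\ast}(x_k)={\rm E}(x_k^{\top}P^{\ast}x_k)+\frac{\gamma}{1-\gamma}{\rm tr}(P^{\ast}W)$ for some $P^{\ast}\in\mathcal{S}_+^n$, and then determine $P^{\ast}$ and the minimizing gain.

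First, I would compute the conditional expectation ${\rm E}(x_{k+1}^{\top}P^{\ast}x_{k+1}\mid x_k,u_k)$ from the dynamics \eqref{eq1}. Because $d_k$ and $w_k$ are zero-mean, mutually independent and independent of $x_k$, with ${\rm E}(d_k^2)=1$ and ${\rm E}(w_kw_k^{\top})=W$, every cross term drops out and the expectation equals $(Ax_k+Bu_k)^{\top}P^{\ast}(Ax_k+Bu_k)+(Cx_k+Du_k)^{\top}P^{\ast}(Cx_k+Du_k)+{\rm tr}(P^{\ast}W)$. Substituting this into the Hamiltonian, the constant trace terms cancel and I obtain a quadratic form in $u_k$.

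Second, completing the square in $u_k$ with $\Lambda=R+\gamma B^{\top}P^{\ast}B+\gamma D^{\top}P^{\ast}D$ (positive definite, so the inverse in \eqref{eq_lem5_1} is well defined) rewrites this quadratic as $(u_k-L^{\ast}x_k)^{\top}\Lambda(u_k-L^{\ast}x_k)$ plus an $x_k$-term whose coefficient is precisely the right-hand side of SARE \eqref{eq_lem5_2}. This simultaneously identifies the minimizer $u_k=L^{\ast}x_k$ with $L^{\ast}$ as in \eqref{eq_lem5_1} and forces the coefficient matrix to equal $P^{\ast}$, which yields the SARE. The same identity serves as a verification step: summing it along any admissible trajectory with discount factors, the boundary term $\gamma^i{\rm E}(x_{k+i}^{\top}P^{\ast}x_{k+i})$ vanishes (as in the proof of Lemma \ref{lem3}), so $V^{L}(x_0)=V^{\ast}(x_0)+{\rm E}\big(\sum_i\gamma^i(u_i-L^{\ast}x_i)^{\top}\Lambda(u_i-L^{\ast}x_i)\big)\geq V^{\ast}(x_0)$, with equality exactly when $u_i=L^{\ast}x_i$.

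The hardest part is showing that such a $P^{\ast}\in\mathcal{S}_+^n$ exists, is unique, and that $L^{\ast}$ is admissible so that it lies in $U_{\rm ad}$. For existence and admissibility I would use Assumption \ref{ass1} together with a monotone policy-iteration argument: starting from an admissible gain, the iterates form a monotone sequence in $\mathcal{S}_+^n$ that is bounded below, hence converges to a fixed point of the SARE whose induced gain passes the admissibility test of Lemma \ref{lem2} (or Lemma \ref{lem4}). For uniqueness, since the state covariance ${\rm E}(x_kx_k^{\top})$ is positive definite (Remark \ref{rem_Exx}), the quadratic form determines its matrix, so the optimal value function pins down $P^{\ast}$ uniquely; any second admissible solution of the SARE would generate the same value function and therefore coincide with $P^{\ast}$.
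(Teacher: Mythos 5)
Your proposal is correct, but it takes a genuinely different route from the paper. The paper's proof is a pure first-order necessary-condition argument: it differentiates the Hamiltonian $H(x_k,L)$ with respect to the gain matrix $L$, uses the positive definiteness of $R+\gamma B^{\top}PB+\gamma D^{\top}PD$ and of ${\rm E}(x_kx_k^{\top})$ (Remark \ref{rem_Exx}) to solve the stationarity condition for \eqref{eq_lem5_1}, and then obtains SARE \eqref{eq_lem5_2} by substituting that gain back into the Bellman equation \eqref{eq7}; it leans on the cited result that the optimal policy is linear, and it does not address sufficiency, existence, or uniqueness at all. You instead complete the square in $u_k$ inside the Bellman optimality equation, which buys three things the paper's argument does not: (i) global (not merely stationary) minimality in $u_k$, since the quadratic form in $u_k-L^{\ast}x_k$ is weighted by the positive definite matrix $\Lambda$; (ii) a verification identity $V^{L}(x_0)=V^{\ast}(x_0)+{\rm E}\big(\sum_i\gamma^i(u_i-L^{\ast}x_i)^{\top}\Lambda(u_i-L^{\ast}x_i)\big)$ proving optimality over all admissible policies with an exact excess-cost formula, rather than assuming linearity of the optimum up front; and (iii) an existence argument via monotone policy iteration, which is in substance the paper's own Lemma \ref{lem10} (and your use of it only to produce a fixed point of the SARE avoids the circularity of Lemma \ref{lem10}'s reliance on Lemma \ref{lem5} for uniqueness). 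The one place where your argument is no more rigorous than the paper's is admissibility of $L^{\ast}$: your parenthetical appeal to Lemma \ref{lem4} hides the same dependence on the discount factor being chosen sufficiently large that the paper itself only handles informally (Remark \ref{rem4}, and the one-line proof of property 3 in Lemma \ref{lem10}); also note uniqueness of $P^{\ast}$ via your value-function identification is uniqueness among SARE solutions with admissible induced gains, which is the natural reading of the lemma but slightly weaker than uniqueness over all of $\mathcal{S}_+^n$.
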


\begin{proof}
The first-order necessary condition for optimality \cite{lewis2012reinforcement} is given by
\begin{align}\label{proof_lem5_2}
\nonumber \frac{\partial H(x_k,L)}{\partial L}
&=2(R+\gamma B^{\top}PB+\gamma D^{\top}PD)L{\rm E}(x_kx_k^{\top})\\
\nonumber  &\quad  +2(\gamma B^{\top}PA+\gamma D^{\top}PC){\rm E}(x_kx_k^{\top})\\
 &=0.
\end{align}
Note that $R+\gamma B^{\top}PB+\gamma D^{\top}PD$ is positive definite for any $P\in \mathcal{S}_+^n$ and one has ${\rm E}(x_kx_k^{\top})\in \mathcal{S}_+^n$ from Remark \ref{rem_Exx}.
 Hence, the optimal control gain
$L^{\ast}$ is obtained as \eqref{eq_lem5_1}.

SARE \eqref{eq_lem5_2} can be obtained by substituting \eqref{eq1} and \eqref{eq_lem5_1} into equation \eqref{eq7}.
\end{proof}

\begin{rem}\label{rem4}
The optimal control policy is closely related to the discount factor $\gamma$.
Note that substituting \eqref{eq_lem5_1} into SARE \eqref{eq_lem5_2}, one has that $P^{\ast}$ and $L^{\ast}$ satisfy SLE \eqref{eq5}.
However, from Remark \ref{rem3}, the existence of a unique positive definite solution to \eqref{eq5} cannot guarantee the admissibility
of $L^{\ast}$.
In practice, one can gradually increase $\gamma$ to obtain an admissible optimal control policy according to Lemma \ref{lem4}.
A lower bound $\gamma^{\ast}>\bar{c}$ of the discount factor $\gamma$ can be found from \cite[Corollary 3]{7588063}, where the $\bar{c}$ is obtained by solving the linear matrix inequalities.
\end{rem}

\section{model-based RL to solve stochastic LQR}
%It is hard to solve SARE \eqref{eq_lem5_2}
%due to the nonlinear characteristics.
%Instead of solving SARE \eqref{eq_lem5_2} directly, in this section, an off-line PI algorithm
%and an on-line PI algorithms are proposed.
%The convergence of the off-line PI algorithm is proven, provided that the initial control policy is admissible.

%As proposed in Remark \ref{rem4}, $L^{\ast}$ and $P^{\ast}$ satisfy SLE \eqref{eq5}.

In this section, an offline PI (Algorithm 1)
is proposed to solve the stochastic LQR problem.
In Algorithm 1, a set of control gains are evaluated in an offline manner.
Moreover, the system matrices are required
in both Policy Evaluation step and Policy Update step.
The convergence of this algorithm to the optimal admissible control gain is proved in Lemma \ref{lem10}, which is an extension of \cite[Theorem 1]{hewer1971iterative}.

\begin{algorithm}[htb]%
\caption{Offline PI} %算法的名字
\hspace*{0.02in} {\bf Input:} %算法的输入， \hspace*{0.02in} 用来控制位置，同时利用 \\ 进行换行
Admissible control gain $L^{(0)}$, discount factor $\gamma$, maximum number of iterations $i_{max}$, convergence tolerance $\varepsilon$\\
\hspace*{0.02in} {\bf Output:} %算法的结果输出
The estimated optimal control gain $\hat{L}$
\begin{algorithmic}[1]
% \State 后写一般语句
\For{$i=0:i_{max}$} % For 语句，需要和EndFor对应
　　\State \textbf{Policy Evaluation:} %solve for $P^{(i)}$ through the SLE
\begin{align}\label{off_pi_1}
\nonumber P^{(i)}&=\gamma (A+BL^{(i)})^{\top}P^{(i)}(A+BL^{(i)})+\gamma(C+DL^{(i)})^{\top}\\
 &\quad \times P^{(i)}(C+DL^{(i)})+(L^{(i)})^{\top}RL^{(i)}+Q
 \end{align}
　　\State  \textbf{Policy Improvement:}
\begin{align}\label{off_pi_2}
\nonumber L^{(i+1)}&=-(R+\gamma B^{\top}P^{(i)}B+\gamma D^{\top}P^{(i)}D)^{-1}\\
&\quad \times(\gamma B^{\top}P^{(i)}A+\gamma D^{\top}P^{(i)}C)
\end{align}
　　\If{$\|L^{(i+1)}-L^{(i)}\|<\varepsilon$}% If 语句，需要和EndIf 对应
        \State Break
　　\EndIf\State\textbf{endif}
\EndFor\State\textbf{endfor}
\State $\hat{L}=L^{(i+1)}$
%\While{condition} % While语句，需要和EndWhile对应
%　　\State ...
%\EndWhile
%\State \Return result
\end{algorithmic}
\end{algorithm}

\begin{lem}\label{lem10}
Given an initial admissible control gain $L^{(0)}$.
Consider the two sequences $\{P^{(i)}\}_{i=0}^\infty$ and $\{L^{(i)}\}_{i=1}^\infty$
obtained from Algorithm 1. If the discount factor $\gamma$ is chosen properly large (less than 1),
 then, for $i=0,1,2\cdots$, the following properties hold:
\begin{enumerate}
  \item $P^{\ast}\leq P^{(i+1)}\leq P^{(i)}$;
  \item $\underset{i\rightarrow \infty}{\lim}P^{(i)}=P^{\ast}$,
  $\underset{i\rightarrow \infty}{\lim}L^{(i)}=L^{\ast}$, where $P^{\ast}$ is the solution to SARE \eqref{eq_lem5_2} and $L^{\ast}$ is computed in \eqref{eq_lem5_1};
  \item $L^{\ast}$ and $L^{(i)}$ are admissible.
\end{enumerate}
\end{lem}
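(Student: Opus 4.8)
The plan is to follow Hewer's monotone-convergence argument for the deterministic Riccati iteration, adapting it to the $\gamma$-discounted stochastic setting. Two features demand extra care: the evaluation step is a $\gamma$-weighted stochastic Lyapunov equation, and---unlike the classical case---its solvability does not by itself certify admissibility (Remark \ref{rem3}), so admissibility has to be re-proved at each iteration through Lemma \ref{lem4}. For a gain $L$ and $S\in\mathcal{S}_+^n$, introduce
\[
\mathcal{A}_L(S)=\gamma(A+BL)^{\top}S(A+BL)+\gamma(C+DL)^{\top}S(C+DL),\quad \mathcal{T}_L(S)=\mathcal{A}_L(S)+L^{\top}RL+Q,
\]
so that \eqref{off_pi_1} reads $P^{(i)}=\mathcal{T}_{L^{(i)}}(P^{(i)})$ and \eqref{off_pi_2} returns the unique minimizer $\hat L_{P^{(i)}}$ of $L\mapsto\mathcal{T}_L(P^{(i)})$. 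The basic tool is the completing-the-square identity
\[
\mathcal{T}_L(S)=\mathcal{T}_{\hat L_S}(S)+(L-\hat L_S)^{\top}\big(R+\gamma B^{\top}SB+\gamma D^{\top}SD\big)(L-\hat L_S),
\]
whose middle factor is positive definite for every $S\in\mathcal{S}_+^n$. Applied at $S=P^{(i)}$ it gives $\mathcal{T}_{L^{(i+1)}}(P^{(i)})\le\mathcal{T}_{L^{(i)}}(P^{(i)})=P^{(i)}$, and applied at $S=P^{\ast}$ (where $\hat L_{P^{\ast}}=L^{\ast}$ and $\mathcal{T}_{L^{\ast}}(P^{\ast})=P^{\ast}$ by Remark \ref{rem4}) it gives $\mathcal{T}_{L^{(i+1)}}(P^{\ast})\ge P^{\ast}$.

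I would argue by induction on $i$, the hypothesis being that $L^{(i)}$ is admissible. Admissibility of $L^{(i)}$ and $\gamma<1$ force $\gamma\rho(M^{(i)})<1$ (with $M^{(i)}$ the matrix of Remark \ref{rem3} built from $L^{(i)}$), so $I-\mathcal{A}_{L^{(i)}}$ is boundedly invertible through its Neumann series and $P^{(i)}\in\mathcal{S}_+^n$ exists uniquely. To produce $P^{(i+1)}$ before admissibility of $L^{(i+1)}$ is known, run the iteration $S_0=P^{(i)}$, $S_{k+1}=\mathcal{T}_{L^{(i+1)}}(S_k)$: the bound above gives $S_1\le S_0$, monotonicity of $\mathcal{T}_{L^{(i+1)}}$ propagates $S_{k+1}\le S_k$, and $S_k\ge0$ bounds the sequence below, so $S_k$ decreases to a fixed point $P^{(i+1)}\in\mathcal{S}_+^n$ of \eqref{off_pi_1} satisfying $P^{(i+1)}\le P^{(i)}$.

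The crux---and the step where ``$\gamma$ chosen properly large'' is used---is upgrading this to admissibility of $L^{(i+1)}$. Since $P^{(i+1)}$ already solves \eqref{eq5} with $L=L^{(i+1)}$, Lemma \ref{lem4} applies once the strict inequality $P^{(i+1)}<\frac{Q+(L^{(i+1)})^{\top}RL^{(i+1)}}{1-\gamma}$ is verified. Every iterate satisfies $P^{(i+1)}\le P^{(i)}\le\dots\le P^{(0)}$, and $P^{(0)}$ remains bounded as $\gamma\uparrow1$ because the undiscounted Lyapunov equation for the admissible $L^{(0)}$ is solvable ($\rho(M^{(0)})<1$ keeps $I-\gamma M^{(0)}$ invertible on $[0,1]$); meanwhile $Q\in\mathcal{S}^n$ is positive definite and fixed, so $\frac{Q}{1-\gamma}$ eventually dominates the uniform upper bound $P^{(0)}$. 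Hence for all $\gamma$ sufficiently close to $1$ the inequality holds simultaneously for every $i$, Lemma \ref{lem4} certifies that $L^{(i+1)}$ is admissible, and the induction closes. With $L^{(i+1)}$ admissible, $(I-\mathcal{A}_{L^{(i+1)}})^{-1}=\sum_{k\ge0}\mathcal{A}_{L^{(i+1)}}^{\,k}$ is positivity-preserving; subtracting the two Lyapunov identities gives $P^{(i+1)}-P^{\ast}\ge\mathcal{A}_{L^{(i+1)}}(P^{(i+1)}-P^{\ast})$, and applying this inverse yields $P^{(i+1)}\ge P^{\ast}$, completing (1). The identical inequality for $P^{\ast}$ and $L^{\ast}$ gives the admissibility of $L^{\ast}$ in (3).

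For (2), the sequence $\{P^{(i)}\}$ is monotone decreasing and bounded below by $P^{\ast}\ge0$, hence converges to some $P^{\infty}\ge P^{\ast}$. Because $R+\gamma B^{\top}P^{(i)}B+\gamma D^{\top}P^{(i)}D$ is uniformly positive definite, the matrix inverse in \eqref{off_pi_2} is continuous along the sequence, so $L^{(i)}\to L^{\infty}=\hat L_{P^{\infty}}$; passing to the limit in \eqref{off_pi_1} shows $P^{\infty}=\mathcal{T}_{L^{\infty}}(P^{\infty})$ with $L^{\infty}$ the minimizer, i.e. $P^{\infty}$ solves SARE \eqref{eq_lem5_2}. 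By the uniqueness asserted in Lemma \ref{lem5} we get $P^{\infty}=P^{\ast}$ and therefore $L^{\infty}=L^{\ast}$. I expect the admissibility bookkeeping of the third paragraph to be the main obstacle, since it is the only place the discount factor genuinely interferes with the classical argument.
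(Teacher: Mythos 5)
Your proof is correct, and while it shares the Hewer-style backbone (completing the square, monotone decrease, passage to the limit in the SARE) with the paper's argument, it diverges at the two steps that matter. First, to obtain $P^{(i+1)}$ the paper does not run your value iteration $S_{k+1}=\mathcal{T}_{L^{(i+1)}}(S_k)$; instead it rewrites the current identity as $P^{(i)}=\gamma\mathcal{V}^{(i+1)}(P^{(i)})+U^{(i+1,i)}$ with a positive definite constant term $U^{(i+1,i)}$ and invokes Kubrusly's theorem to conclude $\gamma\rho(M^{(i+1)})<1$, which simultaneously yields unique solvability of the next evaluation equation and, after subtracting the two identities, $P^{(i+1)}\le P^{(i)}$. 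That route never needs admissibility of the intermediate gains --- only the weaker property $\gamma\rho(M^{(i)})<1$ --- so properties 1) and 2) come out for any fixed $\gamma<1$ once $L^{(0)}$ is admissible. Second, and more significantly, the paper does not actually prove property 3): it dispatches admissibility of $L^{(i)}$ and $L^{\ast}$ in a single sentence by pointing to Remark \ref{rem4} and an external reference. Your third paragraph supplies exactly the missing argument: the uniform bound $P^{(i)}\le P^{(0)}\le\bar P$, with $\bar P$ the undiscounted Lyapunov solution for the admissible $L^{(0)}$, the observation that $\frac{Q}{1-\gamma}$ eventually dominates $\bar P$, and Lemma \ref{lem4} applied at every iteration with a single $\gamma$-threshold independent of $i$. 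This makes the phrase ``$\gamma$ chosen properly large'' quantitative and the lemma self-contained, at the price of carrying admissibility as an induction hypothesis throughout. Your direct Neumann-series derivation of $P^{(i+1)}\ge P^{\ast}$ is also cleaner than the paper's, which obtains that bound only as a byproduct of convergence.

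One bookkeeping point you should patch: Lemma \ref{lem4} is stated with ``there is a unique solution $P\in\mathcal{S}_+^n$'' as a hypothesis, whereas your fixed-point construction delivers existence only. This is harmless --- the paper's own proof of Lemma \ref{lem4} never uses uniqueness, and once admissibility of $L^{(i+1)}$ is in hand, $\gamma\rho(M^{(i+1)})<1$ gives uniqueness a posteriori, so the fixed point you construct is indeed the iterate Algorithm 1 computes --- but a sentence saying so would close the loop.
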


\begin{proof}
For any $S\in \mathcal{S}_+^n$,
define an operator $\mathcal{V}^{(i)}$: $\mathcal{S}_+^n\rightarrow \mathcal{S}^n$ by
\begin{align}
\nonumber \mathcal{V}^{(i)}(S)&=(A+BL^{(i)})^{\top}S(A+BL^{(i)})\\
\nonumber &\quad +(C+DL^{(i)})^{\top}S(C+DL^{(i)}).
\end{align}
Define $M^{(i)}=(A+BL^{(i)})\otimes (A+BL^{(i)})+(C+DL^{(i)})\otimes (C+DL^{(i)}).$

One can rewrite equation \eqref{off_pi_1} in Algorithm 1 as
\begin{equation}\label{eq29}
 P^{(i)}=\gamma \mathcal{V}^{(i)}(P^{(i)})+(L^{(i)})^{\top}RL^{(i)}+Q
\end{equation}
For $i=0$, one has
\begin{equation}\label{eq30}
P^{(0)}=\gamma \mathcal{V}^{(0)}(P^{(0)})+(L^{(0)})^{\top}RL^{(0)}+Q.
\end{equation}
Because $L^{(0)}$ is admissible,
$\rho(M^{(0)})<1$ is satisfied according to \cite[Theorem 1]{kubrusly1985mean}. Hence, $\gamma\rho(M^{(0)})<1$. Then, there is a unique solution $P^{(0)}\in \mathcal{S}_+^n$ to \eqref{eq30} according to \cite[Theorem 1]{kubrusly1985mean}.
%From \cite[Theorem 1]{kubrusly1985mean}, if $L^{(0)}$ is admissible, $\rho(M^{(0)})<1$.
%Hence, $\gamma\rho(M^{(0)})<1$ and there is a unique solution $P^{(0)}\in \mathcal{S}_+^n$ to \eqref{eq30}.
For equation \eqref{off_pi_2}, $L^{(1)}$ is computed as
\begin{align}
\nonumber L^{(1)}&=-(R+\gamma B^{\top}P^{(0)}B+\gamma D^{\top}P^{(0)}D)^{-1}\\
\nonumber &\quad \times(\gamma B^{\top}P^{(0)}A+\gamma D^{\top}P^{(0)}C).
\end{align}
Therefore, it can be verified that
\begin{align}\label{eq30_2}
\nonumber&\gamma\mathcal{V}^{(0)}(P^{(0)})+(L^{(0)})^{\top}RL^{(0)} \\
\nonumber &=\gamma\mathcal{V}^{(1)}(P^{(0)})+(L^{(1)})^{\top}RL^{(1)}+
(L^{(1)}-L^{(0)})^{\top}\\
 &\quad \times(R+\gamma B^{\top}P^{(0)}B+\gamma D^{\top}P^{(0)}D)(L^{(1)}-L^{(0)}).
\end{align}
Using \eqref{eq30} and \eqref{eq30_2}, one can derive a new equation for $P^{(0)}$:
\begin{align}\label{eq31}
 P^{(0)}&=\gamma\mathcal{V}^{(1)}(P^{(0)})+U^{(1,0)},
\end{align}
where
\begin{align}
\nonumber U^{(1,0)}&=(L^{(1)})^{\top}RL^{(1)}+
(L^{(1)}-L^{(0)})^{\top}\\
\nonumber &\quad \times(R+\gamma B^{\top}P^{(0)}B+\gamma D^{\top}P^{(0)}D)(L^{(1)}-L^{(0)})\\
\nonumber &\quad+(L^{(0)})^{\top}RL^{(0)}+Q
\end{align}
and obviously $U^{(1,0)}\in \mathcal{S}_+^n$.
Because $P^{(0)}\in \mathcal{S}_+^n$ is the unique solution to \eqref{eq30}, it is also the unique solution to \eqref{eq31}.
Based on \cite[Theorem 1]{kubrusly1985mean}, one has $\gamma\rho(M^{(1)})<1$ and there is a unique solution $P^{(1)}\in \mathcal{S}_+^n$ to the following equation
\begin{equation}\label{eq32}
 P^{(1)}=\gamma \mathcal{V}^{(1)}(P^{(1)})+(L^{(1)})^{\top}RL^{(1)}+Q.
\end{equation}
From \eqref{eq31} and \eqref{eq32}, one obtains
\begin{align}\label{eq33}
\nonumber P&^{(0)}-P^{(1)}=\gamma\mathcal{V}^{(1)}(P^{(0)}-P^{(1)})+(L^{(1)}-L^{(0)})^{\top}\\
 &\quad \times(R+\gamma B^{\top}P^{(0)}B+\gamma D^{\top}P^{(0)}D)(L^{(1)}-L^{(0)}),
\end{align}
and thus $(P^{(0)}-P^{(1)})\in \mathcal{S}^n$ is the unique solution  to \eqref{eq33}.
Hence, $P^{(0)}\geq P^{(1)}$.
Repeating the operations of \eqref{eq30}--\eqref{eq33}, one has $P^{(i+1)}\leq P^{(i)}$. %This completes the proof 1).

%From Lemma \ref{lem5} and Remark \ref{rem4}, one has that $P^{\ast}$ and $L^{\ast}$ satisfy the following equation
%\begin{align}%
%\nonumber P^{\ast}=\gamma\mathcal{V}^{\ast}(P^{\ast})
%+(L^{\ast})^{\top}RL^{\ast}+Q
%\end{align}
%with $\gamma\rho(M^{\ast})<1$.
%Similar to the derivation of \eqref{eq33}, here we obtain
%\begin{align}%\label{}
%\nonumber P&^{(i)}-P^{\ast}=\gamma\mathcal{V}^{\ast}
%(P^{(i)}-P^{\ast})+(L^{\ast}-L^{(i)})^{\top}\\
% \nonumber&\quad \times(R+\gamma B^{\top}P^{(i)}B+\gamma D^{\top}P^{(i)}D)(L^{\ast}-L^{(i)})
%\end{align}
%and $P^{(i)}-P^{\ast}\in\mathcal{S}^n$. Hence, $P^{(i)}\geq P^{\ast}$. The proof of 1) is completed.

Note that $\{P^{(i)}\}_{i=0}^\infty$ is a monotonic non-increasing sequence and positive definite. Hence, $\underset{i\rightarrow \infty}{\lim} P^{(i)}=P^{\infty}$ exists.
Taking the limit $i\rightarrow \infty$ of \eqref{eq29} yields
\begin{equation}\label{eq34}
P^{\infty}=\gamma \mathcal{V}^{\infty}(P^{\infty})+(L^{\infty})^{\top}RL^{\infty}+Q
\end{equation}
and
\begin{align}\label{eq35}
\nonumber L^{\infty}&=-(R+\gamma B^{\top}P^{\infty}B+\gamma D^{\top}P^{\infty}D)^{-1}\\
 &\quad \times(\gamma B^{\top}P^{\infty}A+\gamma D^{\top}P^{\infty}C).
\end{align}
Substituting \eqref{eq35} into \eqref{eq34}, we have
\begin{align}\label{eq36}
\nonumber P^{\infty}&=Q+\gamma A^{\top}P^{\infty}A+\gamma C^{\top}P^{\infty}C
-(\gamma A^{\top}P^{\infty}B\\
\nonumber  &\quad+\gamma C^{\top}P^{\infty}D)(R+\gamma B^{\top}P^{\infty}B+\gamma D^{\top}P^{\infty}D)^{-1}\\
  &\quad\times
 (\gamma B^{\top}P^{\infty}A+\gamma D^{\top}P^{\infty}C).
\end{align}
From Lemma \ref{lem5}, one knows that $P^{\ast}$ is the unique positive definite solution to SARE \eqref{eq36}. Thus, $P^{\infty}=P^{\ast}$ and $L^{\infty}=L^{\ast}$, which implies $P^{\ast}\leq P^{(i)}$.
The proofs of 1) and 2) are completed.

%Note that $(1-\gamma)P^{(0)}<Q$ and  $P^{(i+1)}\geq P^{(i)}$, one has $(1-\gamma)P^{(i)}\leq (1-\gamma)P^{(0)}$. Thus, $(1-\gamma)P^{(i)}<Q$.
%Based on Lemma \ref{lem4}, we know that  $L^{(i)}$ is admissible.
%$L^{\ast}$ is also admissible because $L^{\ast}=\underset{i\rightarrow \infty}{\lim}L^{(i)}$.
%This completes the proof of 3).
From Remark \ref{rem4} and \cite{7588063}, one knows the control gains $L^{\ast}$ and $L^{(i)}$ are admissible if the discount factor $\gamma$ is chosen properly large.
\end{proof}

\section{model-free RL to solve stochastic LQR}

To remove the requirement of complete knowledge of the
system matrices, a new model-free learning algorithm is proposed to solve the stochastic LQR problem in this section.

%To simplify the expression, rewrite the stochastic LQR Bellman equation \eqref{eq6} as
%\begin{equation}\label{eq13}
%V(x_k)={\rm E}c(x_k,u_k)+\gamma V(x_{k+1}).
%\end{equation}
Based on Bellman equation \eqref{eq6},
define a Q-function as
\begin{equation}\label{eq14}
Q(x_k,\eta_k)={\rm E}\big(c(x_k,\eta_k)\big) + \gamma V(x_{k+1}),
\end{equation}
where $\eta_k$ is an arbitrary control input at time $k$ and $u=Lx$ is used to calculate $V(x_{k+1})$ for time $k+1,k+2,\ldots$.
%The slightly difference between the cost function $V(\cdot)$ and Q-function can refer to \cite[Section 3]{bradtke1994adaptive}.

From \eqref{eq6} and \eqref{eq14}, if $\eta_k=u_k$, one has
\begin{equation}\label{eq15}
Q(x_k,u_k)=V(x_k).
\end{equation}

Substituting \eqref{eq4} into \eqref{eq15}, the Q-function becomes
\begin{align}\label{eq57}
\nonumber &Q(x_k,u_k)\\
\nonumber&=\gamma \big({\rm E}(x_{k+1}^{\top}Px_{k+1})
+\frac{\gamma}{1-\gamma}{\rm {tr}}(PW)\big)
+ {\rm E}\big(c(x_k,u_k)\big)\\
\nonumber &=\gamma {\rm E}\Big(\big(Ax_k+Bu_k+(Cx_k+Du_k)d_k+w_k\big)^{\top}P\\
\nonumber &\quad \times \big(Ax_k+Bu_k+(Cx_k+Du_k)d_k+w_k\big)\\
\nonumber &\quad +\frac{\gamma}{1-\gamma}{\rm {tr}}(PW)\Big)+{\rm E}(x_k^{\top}Qx_k+u_k^{\top}Ru_k)\\
\nonumber &={\rm E}\big(x_k^{\top}(Q+\gamma A^{\top}PA+\gamma C^{\top}PC)x_k+2\gamma x_k^{\top}(A^{\top}PB\\
\nonumber &\quad +C^{\top}PD)u_k+u_k^{\top}(R+\gamma B^{\top}PB+\gamma D^{\top}PD)u_k\big)\\
 \nonumber&\quad +\frac{\gamma}{1-\gamma}{\rm {tr}}(PW)\\
&={\rm E}\big(\begin{bmatrix}x_k\\
                        u_k
         \end{bmatrix}^{\top}
         H
         \begin{bmatrix}x_k\\
                        u_k
         \end{bmatrix}\big)+\frac{\gamma}{1-\gamma}{\rm {tr}}(PW),
%&={\rm E}\big(\begin{bmatrix}x_k\\
%                        u_k
%         \end{bmatrix}^{\top}
%         \begin{bmatrix}H_{xx} & H_{xu} \\
%             H_{ux} & H_{uu} \\
%         \end{bmatrix}
%         \begin{bmatrix}  x_k\\
%                          u_k
%         \end{bmatrix}\big)
%         +\frac{\gamma}{1-\gamma}{\rm {tr}}(PW),
\end{align}
where
\begin{equation}
\nonumber H=\begin{bmatrix}  H_{xx}&H_{xu}\\
                             H_{ux}&H_{uu}
         \end{bmatrix}\in\mathcal{S}_+^{n+m},
\end{equation}
and
\begin{align}%\label{eq59}
\nonumber H_{xx}&=Q+\gamma A^{\top}PA+\gamma C^{\top}PC\\
\nonumber H_{xu}&=\gamma A^{\top}PB+\gamma C^{\top}PD=H_{ux}^{\top}\\
\nonumber  H_{uu}&=R+\gamma B^{\top}PB+\gamma D^{\top}PD.
\end{align}

%The optimal policy can be obtained by performing the
%minimization of $Q^{\ast}(x_k,u_k)$ which in turn can be carried out
Define the optimal Q-function as \cite{watkins1989learning}
\begin{equation}%\label{eq18}
\nonumber Q^{\ast}(x_k,u_k)={\rm E}\big(c(x_k,u_k)\big)+\gamma V^{\ast}(x_{k+1}).
\end{equation}
By solving
$\frac{\partial Q^{\ast}(x_k,u_k)}{\partial u_k}=0$, the optimal control gain is obtained as
\begin{equation}\label{eq59}
L^{\ast}=-(H^{\ast}_{uu})^{-1}H^{\ast}_{ux},
\end{equation}
where $H^{\ast}_{uu}=R+\gamma B^{\top}P^{\ast}B+\gamma D^{\top}P^{\ast}D$,
$H_{ux}=\gamma B^{\top}P^{\ast}A=H_{xu}^{\top}$, and $P^{\ast}$ satisfies SARE \eqref{eq_lem5_2}.

From \eqref{eq4}, \eqref{eq15}, \eqref{eq57} and the positive definiteness  of ${\rm E}(x_kx_k^{\top})$ in Remark 1, one has
\begin{equation}\label{eq60}
P=      \begin{bmatrix}I\\
                        L
        \end{bmatrix}^{\top}
        H
        \begin{bmatrix}I\\
                        L
        \end{bmatrix}.
\end{equation}
Substituting \eqref{eq60} into \eqref{eq57},
the Q-function can be computed as
\begin{align}\label{eq58_2}
\nonumber &Q(x_k,u_k)\\
&={\rm E}\big(\begin{bmatrix}x_k\\
                        u_k
         \end{bmatrix}^{\top}
                            H
         \begin{bmatrix}  x_k\\
                          u_k
         \end{bmatrix}\big)
         +\frac{\gamma}{1-\gamma}{\rm {tr}}\big(
         H
         \begin{bmatrix}I\\
                   L
    \end{bmatrix} W
    \begin{bmatrix}I\\
                   L
    \end{bmatrix}^{\top}\big).
\end{align}

Based on \eqref{eq14} and \eqref{eq15},  one obtains the Bellman function for Q-function:
\begin{equation}\label{eq16}
 Q(x_k,u_k)={\rm E}\big(c(x_k,u_k)\big) + \gamma Q(x_{k+1},u_{k+1}).
\end{equation}

Using \eqref{eq58_2}, \eqref{eq16}, one has
%\begin{align}\label{eq16_2}
%\nonumber &{\rm E}\big(\begin{bmatrix}x_k\\
%                        u_k
%         \end{bmatrix}^{\top}
%H
%         \begin{bmatrix}  x_k\\
%                          u_k
%         \end{bmatrix}\big)
%          +\frac{\gamma}{1-\gamma}{\rm {tr}}\big( H\begin{bmatrix}I\\
%                   L
%    \end{bmatrix} W
%    \begin{bmatrix}I\\
%                   L
%    \end{bmatrix}^{\top}\big)\\
%\nonumber &={\rm E} \big(c(x_k,u_k)\big)
%         +\gamma  {\rm E}\big(
%         \begin{bmatrix}x_{k+1}\\
%                        u_{k+1}
%         \end{bmatrix}^{\top}
%         H
%         \begin{bmatrix}  x_{k+1}\\
%                          u_{k+1}
%         \end{bmatrix} \big)        \\
% &\quad+\frac{\gamma^2}{1-\gamma}{\rm {tr}}\big( H\begin{bmatrix}I\\
%                   L
%    \end{bmatrix} W
%    \begin{bmatrix}I\\
%                   L
%    \end{bmatrix}^{\top}\big).
%\end{align}
%The above equation can be simplified to
\begin{align}\label{eq16_2}
\nonumber &{\rm E}\big(\begin{bmatrix}x_k\\
                        u_k
         \end{bmatrix}^{\top}
H
         \begin{bmatrix}  x_k\\
                          u_k
         \end{bmatrix}\big)\\
\nonumber &={\rm E}\big(c(x_k,u_k)\big)
         +\gamma {\rm E}\big(
         \begin{bmatrix}x_{k+1}\\
                        u_{k+1}
         \end{bmatrix}^{\top}
         H
         \begin{bmatrix}  x_{k+1}\\
                          u_{k+1}
         \end{bmatrix} \big)        \\
 &\quad-\gamma {\rm {tr}}\big( H\begin{bmatrix}I\\
                   L
    \end{bmatrix} W
    \begin{bmatrix}I\\
                   L
    \end{bmatrix}^{\top}\big).
\end{align}

Now, we propose a new model-free learning algorithm in Algorithm 2 according to \eqref{eq59} and \eqref{eq16_2}.

\begin{algorithm}[htb]\label{alg3}%[htb]
\caption{Online Model-free RL} %算法的名字
\hspace*{0.02in} {\bf Input:} %算法的输入， \hspace*{0.02in} 用来控制位置，同时利用 \\ 进行换行
Admissible control gain $L^{(0)}$, initial state covariance matrix $X_0$, additive noise covariance matrix $W$, discount factor $\gamma$, maximum number of iterations $i_{max}$, convergence tolerance $\varepsilon$\\
\hspace*{0.02in} {\bf Output:} %算法的结果输出
The estimated optimal control gain $\hat{L}$
\begin{algorithmic}[1]
% \State 后写一般语句
\For{$i=0:i_{max}$} % For 语句，需要和EndFor对应
　　\State \textbf{Policy Evaluation:}%solve for $H^{(i)}$ through the stochastic LQR Bellman equation for Q-function
\begin{align}\label{on_pi_q_1}
\nonumber &{\rm E}\big(\begin{bmatrix}x_k\\
                        u^{(i)}_k
         \end{bmatrix}^{\top}
H^{(i)}
         \begin{bmatrix}  x_k\\
                          u^{(i)}_k
         \end{bmatrix}\big)\\
\nonumber&={\rm E} \big(c(x_k,u^{(i)}_k)\big)
+\gamma {\rm E}\big(
         \begin{bmatrix}x_{k+1}\\
                        u^{(i)}_{k+1}
         \end{bmatrix}^{\top}
         H^{(i)}
         \begin{bmatrix}  x_{k+1}\\
                          u^{(i)}_{k+1}
         \end{bmatrix}\big)\\
&\quad  -\gamma {\rm {tr}}\big( H^{(i)}\begin{bmatrix}I\\
                   L^{(i)}
    \end{bmatrix} W
    \begin{bmatrix}I\\
                   L^{(i)}
    \end{bmatrix}^{\top}\big)
\end{align}
　　\State  \textbf{Policy Improvement:}
\begin{align}\label{on_pi_q_2}
L^{(i+1)}=-(H^{(i)}_{uu})^{-1}H^{(i)}_{ux}
\end{align}
　　\If{$\|L^{(i+1)}-L^{(i)}\|<\varepsilon$}% If 语句，需要和EndIf 对应
        \State Break
　　\EndIf\State\textbf{endif}
\EndFor\State\textbf{endfor}
\State $\hat{L}=L^{(i+1)}$
%\While{condition} % While语句，需要和EndWhile对应
%　　\State ...
%\EndWhile
%\State \Return result
\end{algorithmic}
\end{algorithm}

%In contrast to the stochastic LQR Bellman equation \eqref{on_pi_1}, the control input is in quadratic form of
%the Q-function Bellman equation \eqref{on_pi_q_1}.
%Therefore,

\begin{rem}\label{rem5}
Compared with
Algorithm 1, Algorithm 2 evaluates the iterative matrix $H^{(i)}$ in an online manner using data acquired along the system trajectories. Moreover, Policy Improvement step in Algorithm 2 is carried out in terms of the learned kernel
matrix $H^{(i)}$ without resorting to the system matrices.
\end{rem}

%Note that the convergence proof of Algorithm 3 is an extension of \cite[Theorem 1]{ALTAMIMI2007473} and is presented in the following lemma and theorem.

\begin{lem}\label{lem9}
Algorithm 2 is equivalent to Algorithm 1 in the sense that  equations \eqref{off_pi_1} and \eqref{on_pi_q_1} are equivalent, and equations \eqref{off_pi_2} and \eqref{on_pi_q_2} are equivalent.
\end{lem}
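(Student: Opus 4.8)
The plan is to exhibit a one-to-one correspondence between the cost-function kernel $P^{(i)}$ of Algorithm~1 and the Q-function kernel $H^{(i)}$ of Algorithm~2, and then to verify that, under this correspondence, each step of one algorithm is algebraically identical to the matching step of the other. The bridge is the block structure obtained for the Q-function, namely $H_{xx}=Q+\gamma A^{\top}PA+\gamma C^{\top}PC$, $H_{xu}=H_{ux}^{\top}=\gamma A^{\top}PB+\gamma C^{\top}PD$, and $H_{uu}=R+\gamma B^{\top}PB+\gamma D^{\top}PD$, together with the identity \eqref{eq60}, which reads $P^{(i)}=\begin{bmatrix}I\\ L^{(i)}\end{bmatrix}^{\top}H^{(i)}\begin{bmatrix}I\\ L^{(i)}\end{bmatrix}$. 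I would first dispatch the Policy Improvement steps, which are immediate, and then treat the Policy Evaluation steps, which carry all of the work.

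For the Policy Improvement, I would substitute the block expressions $H^{(i)}_{uu}=R+\gamma B^{\top}P^{(i)}B+\gamma D^{\top}P^{(i)}D$ and $H^{(i)}_{ux}=\gamma B^{\top}P^{(i)}A+\gamma D^{\top}P^{(i)}C$ directly into \eqref{on_pi_q_2}. This reproduces \eqref{off_pi_2} verbatim, so the two improvement rules coincide whenever the kernels correspond as above.

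For the Policy Evaluation, I would evaluate \eqref{on_pi_q_1} along the current policy, so that $u^{(i)}_k=L^{(i)}x_k$ and $u^{(i)}_{k+1}=L^{(i)}x_{k+1}$. Using \eqref{eq60}, the left-hand side collapses to ${\rm E}(x_k^{\top}P^{(i)}x_k)$ and the stage-cost term becomes ${\rm E}(x_k^{\top}(Q+(L^{(i)})^{\top}RL^{(i)})x_k)$. For the look-ahead term I would insert the conditional second-moment recursion obtained from \eqref{eq1} as in Remark~\ref{rem_Exx}, that is ${\rm E}(x_{k+1}x_{k+1}^{\top}\mid x_k)=(A+BL^{(i)})x_kx_k^{\top}(A+BL^{(i)})^{\top}+(C+DL^{(i)})x_kx_k^{\top}(C+DL^{(i)})^{\top}+W$, and then use cyclicity of the trace to write $\gamma{\rm E}(x_{k+1}^{\top}P^{(i)}x_{k+1}\mid x_k)=\gamma\,x_k^{\top}\mathcal{V}^{(i)}(P^{(i)})x_k+\gamma\,{\rm tr}(P^{(i)}W)$, where $\mathcal{V}^{(i)}$ is the operator of Lemma~\ref{lem10}. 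The remaining additive-noise term is handled by \eqref{eq60} and cyclicity, giving ${\rm tr}\big(H^{(i)}\begin{bmatrix}I\\ L^{(i)}\end{bmatrix}W\begin{bmatrix}I\\ L^{(i)}\end{bmatrix}^{\top}\big)={\rm tr}(P^{(i)}W)$, so the two occurrences of $\gamma\,{\rm tr}(P^{(i)}W)$ cancel. What survives is $x_k^{\top}P^{(i)}x_k=x_k^{\top}\big(Q+(L^{(i)})^{\top}RL^{(i)}+\gamma\mathcal{V}^{(i)}(P^{(i)})\big)x_k$, which is exactly the quadratic form of \eqref{off_pi_1}. Because every manipulation here is an identity, the reduction is reversible, which gives equivalence of \eqref{off_pi_1} and \eqref{on_pi_q_1}.

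The step I expect to be the main obstacle is the final passage from the scalar identity in $x_k$ to the matrix equation \eqref{off_pi_1}. This is legitimate only because \eqref{on_pi_q_1} is required to hold for the whole family of states generated along the trajectories, not for a single sample; equivalently, the conditional form holds for arbitrary $x_k$, and quadratic forms that agree for all $x_k$ must have equal symmetric kernels. The positive definiteness of ${\rm E}(x_kx_k^{\top})$ from Remark~\ref{rem_Exx} is precisely what makes this reduction well posed and guarantees that the $H^{(i)}$ singled out by the online evaluation is unique and equals the block matrix built from the solution $P^{(i)}$ of \eqref{off_pi_1}. Apart from this point, the equivalence is bookkeeping with the block formulas, \eqref{eq60}, and the trace cancellation.
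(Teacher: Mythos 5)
Your proposal is correct and takes essentially the same route as the paper's own proof: both substitute $u_k^{(i)}=L^{(i)}x_k$ into \eqref{on_pi_q_1}, use \eqref{eq60} (with trace cyclicity) to collapse the quadratic forms and the noise term to $P^{(i)}$, expand $x_{k+1}$ via the dynamics \eqref{eq1} so the trace terms cancel and the kernel of \eqref{off_pi_1} emerges, conclude the evaluation step from the positive definiteness of ${\rm E}(x_kx_k^{\top})$ in Remark \ref{rem_Exx}, and settle the improvement step immediately from the block structure of $H^{(i)}$. Your closing remark spelling out why the scalar-to-matrix passage is legitimate (the identity holds conditionally for arbitrary $x_k$, not just for one second moment) is a slightly more careful rendering of the step the paper justifies by citing Remark \ref{rem_Exx}, but the argument is the same.
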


\begin{proof}
Substituting $c(x_k,u^{(i)}_k)=x_k^{\top}Qx_k+(u^{(i)}_k)^{\top}Ru^{(i)}_k$
and $u_k^{(i)}=L^{(i)}x_k$ into equation \eqref{on_pi_q_1}, one obtains
\begin{align}%\label{lem9_1}
\nonumber &{\rm E}
\big(x_k^{\top} \begin{bmatrix}I\\
                        L^{(i)}
         \end{bmatrix}^{\top}
H^{(i)}
         \begin{bmatrix}  I\\
                          L^{(i)}
         \end{bmatrix} x_k\big)
         ={\rm E} (x_k^{\top}Qx_k+(u^{(i)}_k)^{\top}Ru^{(i)}_k)\\
\nonumber &\quad +\gamma {\rm E}\big(
x_{k+1}^{\top} \begin{bmatrix}I\\
                        L^{(i)}
         \end{bmatrix}^{\top}
H^{(i)}
         \begin{bmatrix}  I\\
                          L^{(i)}
         \end{bmatrix} x_{k+1}\big)  \\
\nonumber&\quad-\gamma {\rm {tr}}\big( H^{(i)}\begin{bmatrix}I\\
                   L^{(i)}
    \end{bmatrix} W
    \begin{bmatrix}I\\
                   L^{(i)}
    \end{bmatrix}^{\top}\big).
\end{align}
Based on \eqref{eq60}, the above equation can be rewritten as
\begin{align}\label{lem9_2}
\nonumber &{\rm E}(x_k^{\top}P^{(i)}x_k)
={\rm E} (x_k^{\top}Qx_k+(u^{(i)}_k)^{\top}Ru^{(i)}_k)\\
&\quad +\gamma {\rm E}(x_{k+1}^{\top}P^{(i)}x_{k+1})
-\gamma {\rm {tr}}( P^{(i)}W).
\end{align}
Applying system \eqref{eq1} and $u_k^{(i)}=L^{(i)}x_k$ to \eqref{lem9_2}, one has
\begin{align}%\label{lem9_4}
\nonumber &{\rm E}(x_k^{\top}P^{(i)}x_k)\\
\nonumber &={\rm E}(x_k^{\top}Qx_k+(u^{(i)}_k)^{\top}Ru^{(i)}_k)
+\gamma{\rm E}(x_{k+1}^{\top}P^{(i)}x_{k+1})\\
\nonumber&\quad
 -\gamma {\rm {tr}}(P^{(i)}W)\\
\nonumber&={\rm E} \big(x_k^{\top}(Q+(L^{(i)})^{\top}RL^{(i)})x_k\big)
+\gamma {\rm E}\Big(\big((A+BL^{(i)})x_k\\
\nonumber&\quad +(C+DL^{(i)})x_kd_k+w_k\big)^{\top}
            P^{(i)}
            \big((A+BL^{(i)})x_k\\
\nonumber&\quad +(C+DL^{(i)})x_kd_k+w_k\big)\Big)
-\gamma {\rm {tr}}( P^{(i)}W)\\
\nonumber &={\rm E}\big(x_k^{\top}(\gamma (A+BL^{(i)})^{\top}P^{(i)}(A+BL^{(i)})+\gamma(C+DL^{(i)})^{\top}\\
 &\quad \times P^{(i)}(C+DL^{(i)})+(L^{(i)})^{\top}RL^{(i)}+Q)x_k\big).
\nonumber\end{align}
Due to the positive definiteness  of ${\rm E}(x_kx_k^{\top})$ from Remark \ref{rem_Exx}, we  conclude that Policy Evaluation step
in Algorithm 2 is equivalent to Policy Evaluation step in Algorithm 1.
Moreover, from equation \eqref{eq57}, the equation \eqref{on_pi_q_2} in Algorithm 2 is equivalent to the equation \eqref{off_pi_2} in Algorithm 1. This completes the proof.
\end{proof}

\begin{thm}\label{thm1}
Consider the two sequences $ \{L^{(i)}\}_{i=1}^\infty $ and $ \{H^{(i)}\}_{i=0}^\infty $ obtained in Algorithm 3, then
$\underset{i\rightarrow\infty}{\lim} L^{(i)}=L^{\ast}$ and $\underset{i\rightarrow\infty}{\lim} H^{(i)}=H^{\ast}$,
which means that
$\underset{i\rightarrow\infty}{\lim} u^{(i)}=u^{\ast}$ and $\underset{i\rightarrow\infty}{\lim} Q^{(i)}=Q^{\ast}$.
\end{thm}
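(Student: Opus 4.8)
The plan is to deduce the theorem directly from the equivalence established in Lemma \ref{lem9} together with the convergence of the offline iteration proved in Lemma \ref{lem10}; no fresh analysis of the recursions is needed. First I would use Lemma \ref{lem9} to transfer the problem from the model-free iteration back to Algorithm 1. Concretely, each learned kernel matrix $H^{(i)}$ induces, via \eqref{eq60}, a matrix $P^{(i)}=\begin{bmatrix}I\\ L^{(i)}\end{bmatrix}^{\top}H^{(i)}\begin{bmatrix}I\\ L^{(i)}\end{bmatrix}$ that satisfies exactly the Policy Evaluation equation \eqref{off_pi_1}, while the update \eqref{on_pi_q_2} produces the same gain $L^{(i+1)}$ as \eqref{off_pi_2}. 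Since both algorithms are started from the same admissible gain $L^{(0)}$, the sequences $\{P^{(i)}\}$ and $\{L^{(i)}\}$ generated (implicitly in the model-free case) coincide iteration by iteration with those of Algorithm 1.

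Second, I would invoke Lemma \ref{lem10}. Because the iterates coincide, the offline convergence results carry over verbatim, yielding $\lim_{i\to\infty}P^{(i)}=P^{\ast}$ and $\lim_{i\to\infty}L^{(i)}=L^{\ast}$; the claimed limit $\lim_{i\to\infty}L^{(i)}=L^{\ast}$ is thus immediate. To obtain $\lim_{i\to\infty}H^{(i)}=H^{\ast}$, I would use the explicit block formulas from \eqref{eq57}, namely $H_{xx}=Q+\gamma A^{\top}PA+\gamma C^{\top}PC$, $H_{xu}=\gamma A^{\top}PB+\gamma C^{\top}PD$, and $H_{uu}=R+\gamma B^{\top}PB+\gamma D^{\top}PD$. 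Each block is an affine, hence continuous, function of $P$, so $P^{(i)}\to P^{\ast}$ forces $H^{(i)}\to H^{\ast}$, where $H^{\ast}$ is the kernel built from $P^{\ast}$.

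Finally, the two consequences follow by continuity. Since the control input is $u^{(i)}=L^{(i)}x$, convergence of the gains gives $\lim_{i\to\infty}u^{(i)}=L^{\ast}x=u^{\ast}$. For the Q-functions, I would note that $Q^{(i)}$ is completely determined by $H^{(i)}$ and $L^{(i)}$ through \eqref{eq58_2}, which is a continuous function of these arguments (the stationary second moment appearing in the expectation also depends continuously on the stabilizing gain $L^{(i)}$ by Remark \ref{rem_Exx}); hence $\lim_{i\to\infty}Q^{(i)}=Q^{\ast}$.

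The only genuinely delicate point is the one that Lemma \ref{lem9} is designed to resolve: in the model-free setting $H^{(i)}$ is fit directly from trajectory data without ever forming $P^{(i)}$ or the system matrices, so one must certify that solving the Policy Evaluation relation \eqref{on_pi_q_1} yields \emph{precisely} the kernel carrying the block structure of \eqref{eq57}, and that this kernel is unique. Uniqueness is inherited from the uniqueness of $P^{(i)}$ guaranteed along the iteration in Lemma \ref{lem10} (admissibility of each $L^{(i)}$), combined with the bijective correspondence $P^{(i)}\leftrightarrow H^{(i)}$ furnished by \eqref{eq60} and the block formulas. Once this correspondence and its uniqueness are in place, the convergence transfer is purely a continuity argument and the theorem follows.
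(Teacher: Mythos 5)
Your proposal is correct and follows essentially the same route as the paper, whose entire proof is to combine Lemma \ref{lem9} (the equivalence of the model-free and model-based iterations) with Lemma \ref{lem10} (the convergence of the offline PI); your additional details on the $P^{(i)}\leftrightarrow H^{(i)}$ correspondence via \eqref{eq60} and the continuity of the block formulas simply make explicit what the paper leaves implicit.
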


\begin{proof}
Combing Lemma \ref{lem9} and Lemma \ref{lem10} leads to  Theorem \ref{thm1}.
\end{proof}

\section{implementation of online model-free RL algorithm}
The mathematical expectations  are difficult to implement when the system matrices are unknown.
In this section, for Algorithm 2, a numerical average is adopted to approximate expectation and batch least squares (BLS) \cite{lagoudakis2003least}
is employed to estimate the kernel matrix $H^{(i)}$ of the Q-function.
%without requiring any knowledge of the system matrices.
The implementation of Algorithm 2 is given in Algorithm 3.

%Define the functions ${\rm {vech}}(\cdot)$ and ${\rm {vecs}}(\cdot)$ for real matrices.
%For $X\in \mathbb{R}^{n\times n}$,
%let ${\rm {vech}}(X)\in \mathbb{R}^\frac{n(n+1)}{2}$ denote the vector whose elements are the $n$ diagonal entries of $X$
%and the $\frac{n(n+1)}{2}-n$ distinct entry $[X]_{ij}$.
%For $Y\in \mathbb{R}^{n\times n}$,
%let ${\rm {vecs}}(Y)\in \mathbb{R}^\frac{n(n+1)}{2}$ denote
%the vector whose elements are the $n$ diagonal entries of $Y$ and
%the $\frac{n(n+1)}{2}-n$ distinct sums $[Y]_{ij}+[Y]_{ji}$.
%Specifically, if $X,Y$ are symmetric matrices %denoted as
% then, ${\rm {vech}}(X)$ is computed as\\
%$\left[
%   \begin{array}{cccccccccc}
%   X_{11} & X_{12} & \cdots & X_{1n} & X_{22} & X_{23} & \cdots & X_{2n} & \cdots & X_{nn} \\
%   \end{array}
% \right]^{\top}
%$
%and ${\rm {vecs}}(Y)$ is computed as\\
%$\left[
%   \begin{array}{cccccccccc}
%   Y_{11} & 2Y_{12} & \cdots & 2Y_{1n} & Y_{22} & 2Y_{23} & \cdots & 2Y_{2n} & \cdots & Y_{nn} \\
%   \end{array}
% \right]^{\top}$.

%The following property is true
% \begin{equation}
%\nonumber {\rm {tr}}(XY)={\rm {vech}}(X)^{\top}{\rm {vecs}}(Y).
% \end{equation}

% and the proof of this equation is omitted here.

By vectorization, Equation \eqref{on_pi_q_1} in Algorithm 2 can be rewritten as
\begin{align}%\label{eq72}
\nonumber&{\rm E}
\big(\phi^{\top}(z^{(i)}_k)\big){\rm {vecs}}(H^{(i)})\\
\nonumber&={\rm E}\big(c(z^{(i)}_k)\big)+
\gamma {\rm E}\big(\phi^{\top}(z^{(i)}_{k+1})\big){\rm {vecs}}(H^{(i)})\\
\nonumber&-\gamma {\rm {vech}}(\varsigma^{(i)}){\rm {vecs}}(H^{(i)}),
\end{align}
where
\begin{align}%\label{}
 \nonumber  z^{(i)}_k=[x_k^{\top}~(u^{(i)}_k)^{\top}]^{\top}\in \mathbb{R}^{n+m=p} &,~ \phi(z^{(i)}_k)={\rm {vech}}\big(z^{(i)}_k(z^{(i)}_k)^{\top}\big)
\end{align}
 and
\begin{equation}%\label{eq71}
\nonumber \varsigma^{(i)}=\begin{bmatrix}I\\
                   L^{(i)}
    \end{bmatrix} W
    \begin{bmatrix}I\\
                   L^{(i)}
    \end{bmatrix}^{\top}.
\end{equation}

The kernel matrix $H^{(i)}$ is estimated from data generated under the control policy $u_k=L^{(i)}x_k$ for $N$ time steps.
The BLS estimator of $H^{(i)}$ is given by \cite{lagoudakis2003least}:
\begin{align}\label{eq73}
\nonumber {\rm {vecs}}(H^{(i)})&=\big(({\rm E}\Phi^{(i)})^{\top}({\rm E}\Phi^{(i)}+\gamma \Gamma^{(i)}-\gamma{\rm E} \Psi^{(i)})\big)^{-1}\\
&\quad \times({\rm E}\Phi^{(i)})^{\top}{\rm E}\Upsilon^{(i)},
\end{align}
where $\Phi^{(i)}\in \mathbb{R}^{N\times \frac{p(p+1)}{2}}$, $\Psi^{(i)}\in \mathbb{R}^{N\times \frac{p(p+1)}{2}}$ and  $\Upsilon^{(i)}\in \mathbb{R}^{N}$ are the data matrices defined by
\begin{align}\label{eq74}
\nonumber &\Phi^{(i)}=\begin{bmatrix}
\phi(z^{(i)}_1)&\phi(z^{(i)}_2)&\cdots&\phi(z^{(i)}_N)
\end{bmatrix}^{\top}\\
\nonumber &\Psi^{(i)}=\begin{bmatrix}
\phi(z^{(i)}_2)&\phi(z^{(i)}_3)&\cdots&\phi(z^{(i)}_{N+1})
\end{bmatrix}^{\top}\\
 &\Upsilon^{(i)}=\begin{bmatrix}
c(z^{(i)}_2)&c(z^{(i)}_3)&\cdots&c(z^{(i)}_{N+1})
\end{bmatrix}^{\top},
\end{align}
and $\Gamma^{(i)}$ is a $N\times \frac{p(p+1)}{2}$ matrix whose rows are vectors
${\rm {vech}}(\varsigma^{(i)})$.

Note that $u^{(i)}_k$ is linearly dependent on $x_k$.
Therefore, the BLS estimate equation \eqref{eq74} is not solvable.
To overcome this problem, a probing noise is added to $u^{(i)}_k$ and enough data are collected to ensure the condition ${\rm rank}(\Phi^{(i)})^{\top}\Phi^{(i)})=\frac{p(p+1)}{2}$ holds \cite{DV2009,kiumarsi2017optimal}.
%, where ${\rm rank}(\cdot)$ is the rank of matrices
%The probing noise is in fact used to
%ensure the persistence of
% excitation (PE) condition \cite{kiumarsi2017h}.
% The convergence of the BLS under the PE condition can refer to
%\cite[Section 6]{lagoudakis2003least} and is omitted here.
% \end{rem}

%Based on the above derivation,
Algorithm 3 is a practical implementation of Algorithm 2.
%No knowledge of $(A,B,C,D)$ is needed.

\begin{algorithm}[htb]\label{alg4}%[htb]
\caption{Implementation of Online Model-free RL} % 算法的名字
\hspace*{0.02in} {\bf Input:} %算法的输入， \hspace*{0.02in} 用来控制位置，同时利用 \\ 进行换行
Admissible control gain $L^{(0)}$, initial state covariance matrix $X_0$, additive noise covariance matrix $W$, discount factor $\gamma$, roll out length $N$, maximum number of iterations $i_{max}$,  positive number $num_{mean}$, convergence tolerance $\varepsilon$\\
\hspace*{0.02in} {\bf Output:} %算法的结果输出
The estimated optimal control gain $\hat{L}$
\begin{algorithmic}[1]
% \State 后写一般语句
\For{$i=0:i_{max}$} % For 语句，需要和EndFor对应
　　\State  Sample $x_0$ from a Gaussian distribution with zeros
    \State  mean and covariance $X_0$. Let $\Phi=0$, $\Psi=0$,$\Upsilon= 0$.
    \State Let $\Phi^{(i)}=0$, $\Psi^{(i)}=0$,$\Upsilon^{(i)}= 0$
　　\For {$q=1:num_{mean}$}
        \State Apply  $u_k=L^{(i)}x_k$ for $N$ time steps, collect data \State and construct $\Phi^{(i)}$, $\Psi^{(i)}$ and $\Upsilon^{(i)}$ through \eqref{eq74}.
        \State $\Phi=\Phi+\Phi^{(i)}$, $\Psi=\Psi+\Psi^{(i)}$, $\Upsilon=\Upsilon+\Upsilon^{(i)}$
    \EndFor\State\textbf{endfor}
    \State $\Phi=\Phi/num_{mean}$, $\Psi=\Psi/num_{mean}$ \State $\Upsilon=\Upsilon/num_{mean}$
%    \If{\underline{${\rm rank}(\Phi^{(i)})^{\top}\Phi^{(i)})==\frac{p(p+1)}{2}$}}
       \State ${\rm {vecs}}(H^{(i)})=\big((\Phi)^{\top}(\Phi+\gamma \Gamma^{(i)}-\gamma \Psi)\big)^{-1}\Phi^{\top}\Upsilon$
        \State $L^{(i+1)}=-(H^{(i)}_{uu})^{-1}H^{(i)}_{ux}$
%    \Else
%        \State  \underline{{\rm The inverse in }}$ \eqref{eq73}~${\rm does not exist}
%        \State Break
%    \EndIf\State\textbf{endif}
　　\If{$\|L^{(i+1)}-L^{(i)}\|<\varepsilon$}% If 语句，需要和EndIf 对应
        \State Break
　　\EndIf\State\textbf{endif}
\EndFor\State\textbf{endfor}
\State $\hat{L}=L^{(i+1)}$
%\While{condition} % While语句，需要和EndWhile对应
%　　\State ...
%\EndWhile
%\State \Return result
\end{algorithmic}
\end{algorithm}

\begin{rem}\label{rem7}
At each iteration of Algorithm 3,
the BLS solution \eqref{eq73} is employed to estimate
the Q-function  kernel matrix $H^i$,
where the expectations ${\rm E}\Phi^{(i)}$, ${\rm E}\Psi^{(i)}$ and ${\rm E}\Upsilon^{(i)}$ are
approximated by the numerical averages $\Phi$, $\Psi$ and $\Upsilon$, respectively.
\end{rem}

\begin{rem}\label{rem8}
In \cite{WANG20181}, the system matrices are partly required,
in this paper the knowledge of the system matrices is not required.
Compared with \cite{WANG2016379}, no model neural network is used in this paper.
The authors of \cite{JZP2016noise} and \cite{ZHANG20201} make the assumption that the noises is measurable, here we remove this assumption.
Furthermore, our system \eqref{eq1} is more general,
and the proposed model-free learning algorithm is easier to understand and to implement.
\end{rem}

\section{numerical example}
In this section, a numerical example is presented to evaluate our model-free algorithm.

Consider the following stochastic linear discrete-time system
\begin{align}\label{exam}
\nonumber       &x_{k+1}=\begin{bmatrix}
                        0.8 &1\\
                        1.1&2
                        \end{bmatrix}x_k+
                        \begin{bmatrix}
                        0.2\\1.4
                        \end{bmatrix}u_k
                        + \big(\begin{bmatrix}
                        0.7 &0\\
                        -1 &-0.5
                        \end{bmatrix}x_k\\
             &\qquad~~  +\begin{bmatrix}
                        -1\\
                        0.8
                        \end{bmatrix}u_k\big)d_k
                        +w_k.
\end{align}
Let initial state variance matrix $X_0=I$ and additive noise covariance matrix $W=I$.
The weight matrices and discount factor are selected as $Q=I$, $R=1$ and $\gamma=0.7$, respectively.
The exact solution to SARE \eqref{eq_lem5_2} is
\begin{align}%\label{}
\nonumber P^{\ast}=  \begin{bmatrix}
                      8.2254 &  8.0704\\
                      8.0704& 10.3873
                     \end{bmatrix}
\end{align}
and the optimal control gain is
\begin{align}%\label{}
\nonumber L^{\ast}=[-0.9319~-1.5784].
\end{align}
Thus, one can obtain the optimal cost $V^{\ast}(x_0)= 62.0422$ according to equation \eqref{eq4}.

Choose $L^{(0)}=[-1.4~-2.1]$,
 $i_{max}=20$, $num_{mean}=5$, and $\varepsilon=10^{-2}$.
 Algorithm 3 stops after five iterations and returns the estimated optimal control gain $\hat{L}=[-0.9369~ -1.5772]$ and  the estimated optimal cost $\hat{V}(x_0)=62.1118$.

Fig. 1 shows the comparison between Algorithm 1 and Algorithm 3.
We can see that Fig. 1 verifies the equivalence of Algorithm 1 and Algorithm 2.
The control gains obtained using Algorithm 3 are comparable to those of Algorithm 1.
%There are some slight fluctuations of Algorithm 3, which are near the obtained gains of Algorithm 2.
The control gains obtained using Algorithm 3 have some small fluctuations near the control gains generated by Algorithm 1.
The fluctuations may be due to the numerical average replacement used in Algorithm 3.
\begin{figure}[hbt]
  \centering
  \includegraphics[width=.5\textwidth]{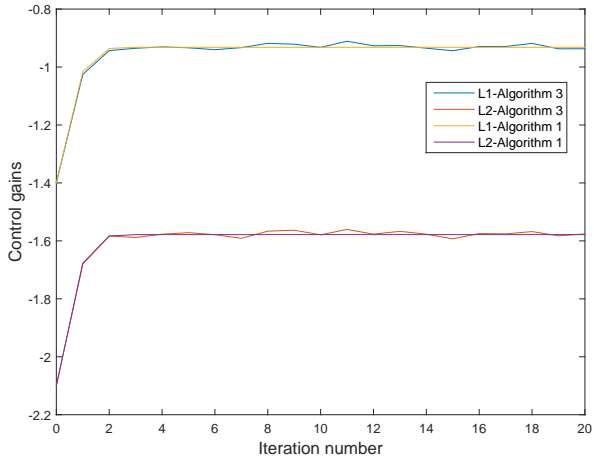}
  \caption{ The control gain curves for Algorithm 1,3 }
  %\label{fig:e21}
\end{figure}

%\begin{figure}[htbp]
%\centering
%\includegraphics[width=6.2cm]{compare_with_model.eps}
%\caption{ The control gain curves for Algorithm 1,3 }
%\end{figure}

The following PI algorithms are evaluated on system \eqref{exam} with the same initial control gain:
\begin{enumerate}
  \item PI in \cite{bradtke1994adaptive}, where the  kernel matrix $H^{(i)}$ is learned based on recursive least squares.
  \item PI in \cite{JIANG20122699}, where the cost function kernel matrix $P^{(i)}$,  the matrices $H_{uu}^{(i)}$ and $H_{ux}^{(i)}$( not the entire kernel matrix $H^{(i)}$) are  estimated .
  \item MFLQv3 in \cite{pmlr-v89-abbasi-yadkori19a}, where a cost function estimate $\hat{V}_i$ is first computed and then the Q-function estimate $\hat{Q}_i$ is estimated. The iterative control policy is a greedy policy with respect to the average of all previous estimates $\hat{Q}_1,\ldots,\hat{Q}_{i-1}$.
\end{enumerate}
To compare the performance, we run each algorithm 10 times and  90000 time steps is used in a single run.
The average curves of $\|L^{(i)}-L^{\ast} \|$ and $\frac{|V^{(i)}(x_0)-V^\ast(x_0)|}{V^\ast(x_0)}$ are shown in Fig. 2 and Fig. 3, respectively.
%where $V^{(i)}(x_0)$ is the cost according to $L^{(i)}$.
We can see, from Fig. 2, that the control gains generated by Algorithm 3 is closer to the optimal control gain.
Fig. 3 shows that Algorithm 3 and MFLQv3 achieve lower relative cost error than PI in \cite{bradtke1994adaptive} and PI in \cite{JIANG20122699}. Moreover, Algorithm 3 needs less iterations to achieve a lower relative cost error.

\begin{figure}[hbt]
  \centering
  \includegraphics[width=.5\textwidth]{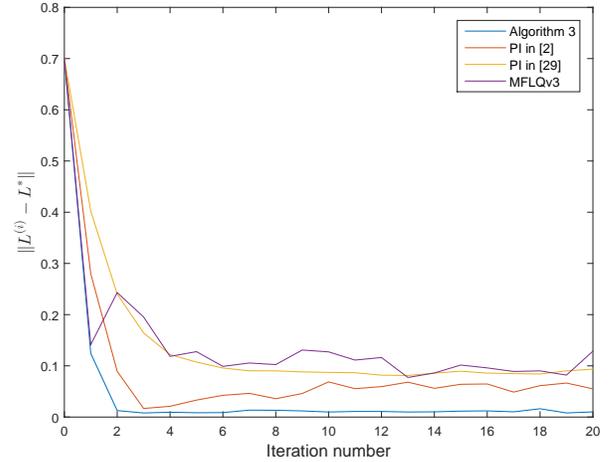}
  \caption{ The control gain distance curves }
  %\label{fig:e21}
\end{figure}
%\begin{figure}[htbp]
%\centering
%\includegraphics[width=6.2cm]{L_compare.eps}
%\caption{ The control gain distance curves}
%\end{figure}

\begin{figure}[hbt]
  \centering
  \includegraphics[width=.5\textwidth]{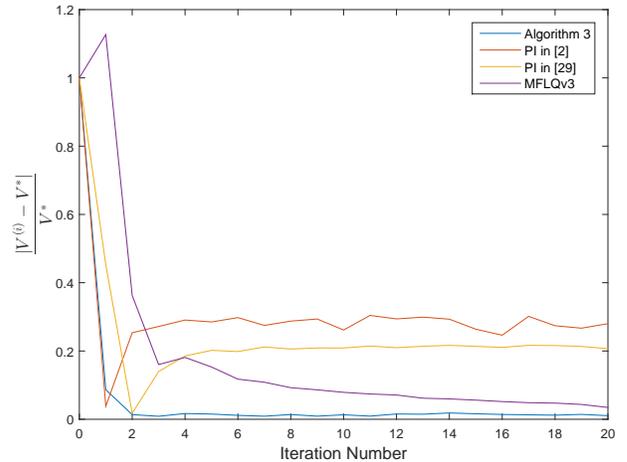}
  \caption{ Relative cost error curves }
  %\label{fig:e21}
\end{figure}
%\begin{figure}[htbp]
%\centering
%\includegraphics[width=6.2cm]{cost_compare.eps}
%\caption{ Relative cost error curves }
%\end{figure}

%It is interesting to note that our simulation
%results match well the experimental results in [10] and [11]. This
%example shows that our stochastic ADP method appears to be a
%suitable candidate for computational learning mechanism in the central
%nervous system to coordinate movements.

%To illustrate the finial iterative control obtained by the Q-learning algorithm is mean-square stabilizing , the state curves of system (10) under the control are given in Fig.  3 .

\section{conclusions}
In this paper, the optimal control problem for a
class of discrete-time stochastic systems subject to additive and multiplicative noises has been investigated.
%The infinite horizon cost function cost may become ill-posed due to
%the presence of the additive noise.
%Therefore, a discounted cost function is adopted where a
%discount factor is employed to guarantee the cost boundedness.
The objective is to find the optimal control policy in the sense of minimizing a discounted cost function and maintaining the asymptotically square stationary property of the system.
To avoid requiring any knowledge of the system matrices, a model-free reinforcement
learning algorithm has been proposed to search for the optimal control policy
 using the data of the system states and control inputs.
The model-free learning algorithm has been implemented through
batch least squares and a numerical average.
The effectiveness of the proposed algorithm
has been illustrated through a numerical example.

\bibliographystyle{IEEEtran}%{elsarticle-num-names}%
%\biboptions{square,numbers,soft&compress}
\bibliography{mybib}
% use section* for acknowledgment

%\section*{Acknowledgment}
%The authors would like to thank...

\ifCLASSOPTIONcaptionsoff
  \newpage
\fi

\end{document}